\newcommand\numberthis{\addtocounter{equation}{1}\tag{\theequation}}
\newtheorem{lemma}{Lemma}
\newtheorem{theorem}{Theorem}
\newcommand{\qmax}{Q_{\mathrm{max}}}
\newcommand{\nrb}{N_{\mathrm{RB}}}
\newcommand{\nre}{N_{\mathrm{RE}}}\newcommand{\dserv}{D_{\mathrm{s}}}
\newcommand{\dwait}{D_{\mathrm{w}}}
\newcommand{\dtot}{D}
\newcommand{\dvp}{\mathcal{D}}
\newcommand{\snr}{\gamma}
\newcommand{\snrInst}{{S}}
\newcommand{\hvar}{{\mu_{h^2}}}
\newcommand{\ddecod}{{\zeta}}
\definecolor{iG}{rgb}{0.0, 0.7, 0.0}
\begin{document}
\title{Delay Analysis of 5G HARQ in the Presence of Decoding and Feedback Latencies}
\author{Vishnu N Moothedath\orcidlink{0000-0002-2739-5060}, 
        Sangwon Seo\orcidlink{0000-0002-9181-9454},
        Neda Petreska\orcidlink{0000-0002-2743-5800},
        Bernhard Kloiber,
        James Gross\orcidlink{0000-0001-6682-6559} 
        % <-this % stops a space
\thanks{Vishnu N Moothedath, Sangwon Seo, and James Gross are with the Department of Intelligent Systems, KTH Royal Institute of Technology, Stockholm, Sweden (e-mail: vnmo@kth.se, sangwon@kth.se, jamesgr@kth.se).}% <-this % stops a space
\thanks{Neda Petreska and Bernhard Kloiber are with Siemens AG, Munich, Germany (e-mail: neda.petreska@siemens.com, bernhard.kloiber@siemens.com).}% <-this % stops a space
% \thanks{Thanks: We appreciate the valuable inputs from our colleague, Sahar Imtiaz.}
}

% \author{%Vishnu, Sangwon, James et.al.
%           Vishnu N Moothedath\orcidlink{0000-0002-2739-5060}$^{\dagger}$,
%            Sangwon Seo\orcidlink{0000-0002-9181-9454}$^{\dagger}$,
%            Neda Petreska\orcidlink{0000-0002-2743-5800}$^{\mathsection}$,
%            Bernhard Kloiber$^{\mathsection}$,
%            James Gross\orcidlink{0000-0001-6682-6559}$^{\dagger}$\\%
%           % <-this % stops a space
%           $^{\dagger}$\{vnmo, sangwon, jamesgr\}@kth.se, $^{\mathsection}$\{neda.petreska, bernhard.kloiber\}@siemens.com\\
%           % \{vnmo, sangwon, jamesgr\}@kth.se\\
%           % <-this % stops a space
%           $^{\dagger}$Department of Intelligent Systems, KTH Royal Institute of Technology, Stockholm, Sweden\\
%           $^{\mathsection}$Siemens AG, Munich, Germany
%           % <-this % stops a space
% % \thanks{Acknowledgement}% <-this % stops a space
% % \thanks{Thanks: We appreciate the valuable inputs from our colleague, Sahar Imtiaz.}
% }

%%%%%%%%%%%%%%%%%%%%%%%%%%%%%%%%%%%% 
% The paper headers
% \markboth{Journal of \LaTeX\ Class Files,~Vol.~14, No.~8, August~2021}%
% {Shell \MakeLowercase{\textit{et al.}}: A Sample Article Using IEEEtran.cls for IEEE Journals}

% \IEEEpubid{0000--0000/00\$00.00~\copyright~2021 IEEE}
% Remember, if you use this you must call \IEEEpubidadjcol in the second
% column for its text to clear the IEEEpubid mark.
%%%%%%%%%%%%%%%%%%%%%%%%%%%%%%%%%%%%

\maketitle
\begin{abstract}
The growing demand for stringent quality of service (QoS) guarantees in 5G networks requires accurate characterisation of delay performance, often measured using Delay Violation Probability (DVP) for a given target delay. Widely used retransmission schemes like Automatic Repeat reQuest (ARQ) and Hybrid ARQ (HARQ) improve QoS through effective feedback, incremental redundancy (IR), and parallel retransmission processes.
However, existing works to quantify the DVP under these retransmission schemes overlook practical aspects such as decoding complexity, feedback delays, and the resulting need for multiple parallel ARQ/HARQ processes that enable packet transmissions without waiting for previous feedback, thus exploiting valuable transmission opportunities. This work proposes a comprehensive multi-server delay model for ARQ/HARQ that incorporates these aspects. Using a finite blocklength error model, we derive closed-form expressions and algorithms for accurate DVP evaluation under realistic 5G configurations aligned with 3GPP standards. Our numerical evaluations demonstrate notable improvements in DVP accuracy over the state-of-the-art, highlight the impact of parameter tuning and resource allocation, and reveal how DVP affects system throughput. 
\end{abstract}

\begin{IEEEkeywords}
5G, HARQ, QoS, delay violation probability (DVP), decoding complexity.
\end{IEEEkeywords}

% \tableofcontents
\IEEEpeerreviewmaketitle
% \clearpage

\section{Introduction}\label{sec:intro}
The advent of 5G networks has marked a significant transformation in wireless communication, for instance, by supporting ultra-reliable and low-latency communication (URLLC), enhanced mobile broadband (eMBB), and massive machine-type communications (mMTC) services~\cite{3gpp2017study,popovski20185G,3gpp.22.104}. 
URLLC demands arguably the strictest quality of service (QoS) requirements in 5G in terms of delay and reliability and is poised to be the main enabler for real-time applications such as autonomous driving, virtual reality, and Industry 4.0\cite{sisinni2018industrial}. 
These applications are typically characterised by short packets transmitted with moderately low throughput\cite{durisi2016URLLC} and require delays in milliseconds with very low packet error rates (PER) of at most $10^{-3}$~\cite{popovski20226g} to $10^{-5}$~\cite{3gpp.38.913}, between various devices like machines, sensors, actuators and controllers.

To meet such strict QoS requirements in 5G, increasing the coding capabilities and reducing the PER beyond a limit is neither feasible with the timing constraints nor cheap in terms of resource costs. 
It is not effective either, as the decoding complexity has a significant negative effect on fulfilling the QoS requirements~\cite{celebi2021latency}. It has been argued that for given channel conditions, it is suboptimal to aim solely to minimise the PER~\cite{Peng2011ReliabilityPHY}, but it is better to aim for a moderate error rate with a good retransmission mechanism.

Automatic repeat request (ARQ)~\cite{ArqCommEngDeskRef} and hybrid automatic repeat request (HARQ)~\cite{Frenger2001HarqHSDPA,DAHLMAN2014299} retransmission schemes have already become ubiquitous in wireless communication. They enhance reliability and reduce latency by effective feedback, selective retransmissions, and incremental redundancy (IR) in the case of HARQ. HARQ, for instance, significantly outperforms no-feedback schemes for low-latency targets under the assumption of limited frequency diversity and no time diversity~\cite{Ostman2018Harq}, typical of the short packet URLLC.
Further, to reduce the latency, these retransmission schemes are implemented as a multi-process transmit queue, where the packets do not wait for feedback from the previous packets. These parallel transmissions of unacknowledged packets are called \textit{ARQ/HARQ processes}.
The need for these ARQ/HARQ processes arises from the decoding complexity and feedback scheduling. This is because, with a single ARQ/HARQ process, the packets have to wait for feedback, wasting all the valuable transmission opportunities during the round-trip time (RTT) of the packet.

In a 5G system, a stricter target latency typically comes at the cost of reduced reliability.
Achieving a sweet spot in the reliability-latency trade-off is thus essential, which is generally measured using the delay violation probability (DVP) of a target delay. 
Current state-of-the-art methods for computing DVP rely on single-server approximations of multi-process ARQ/HARQ schemes, which provide accurate estimates only when the inherent decoding and feedback delays are neglected. 
In this work, we address these critical gaps and explore the relation between DVP and 5G retransmission schemes by modelling ARQ and HARQ as a multi-server queue in the presence of decoding complexity and feedback delay.

\subsection{Related Work}
Several studies have explored the delay performance of wireless networks with and without retransmissions.
From a queuing theoretic perspective, the trade-off between error probability and delay of multi-access systems over AWGN channel is analysed in~\cite{Telatar1995}, and analytical models are developed to compute end-to-end delay in wireless networks modelled as a G/G/1 queue in~\cite{bisnik2006queuing}.
Much work has been done to characterise and derive bounds on the performance of wireless networks using network calculus or large-deviation theory~\cite{devassy2019reliable,Zubaidi2016,yeh2012fundamental,petreska2019bound}.
Some of these include analysing delay and error performance using effective bandwidth~\cite{chang1995EB,hassan2004markov}, 
deriving delay bounds using effective capacity and service curve approaches~\cite{wu2003effectiveCap,fidler2006wlc15},
and deriving delay bounds and solutions for delay distributions using stochastic network calculus~\cite{jiang2008SNC,Ciucu2010} and (min,×) algebra~\cite{Zubaidi2016}.

The performance of ARQ and HARQ retransmission schemes has been widely studied in low-latency environments. 
An effective-capacity\cite{wu2003effectiveCap} analysis of general HARQ systems is given in Larsson et.al.~\cite{larsson2016HARQ}. However, this analysis relies on an asymptotic information-theoretic approach requiring large packets\cite{devassy2019reliable}.
Akin et al.~\cite{akin2015backlog} introduced a state transition model for HARQ systems and derived the effective capacity, modelling packet error rates using outage probability based on Shannon capacity~\cite{Shannon}. However, outage and ergodic capacity are more suited only for long packets and are not appropriate otherwise~\cite{durisi2016URLLC}. 
Further, Schiessl et al.~\cite{schiessl2015delay} analysed the delay of finite blocklength wireless fading channels and showed that the Shannon capacity model significantly overestimates the delay performance in low-latency applications.

To address this, The authors later studied the sensitivity of delay under the finite blocklength regime in \cite{schiessl2018delay} and derived an approximation for the decoding error probability under certain assumptions. 
Specifically on ARQ, Devassy et.al.~\cite{devassy2014finite,devassy2018delay,devassy2019reliable} used finite blocklength capacity over fading channels~\cite{polyanskiy2010FBL,polyanskiy2011feedback,Wei2013FBLblockfading} to study the performance of short packet communication.
In their work, they extended the concept of the slotted Gaussian collision channel with feedback~\cite{caire2000modulation,caire2001throughput} and studied the throughput and delay as a function of the coded packet size and HARQ as a special case. The authors showed the existence of significantly different DVP for the same average delay, thus cementing the fact that studies on average delay are not sufficient for providing useful QoS guarantees. Similar studies by Sahin et al.~\cite{Sahin2014Harq,Sahin2015Harq,Sahin2019Harq} focused on HARQ incremental redundancy (HARQ-IR)~\cite{DAHLMAN2014299} and analyzed its performance over Gilbert-Elliott channels with Rayleigh fading. They modelled HARQ as a Markov chain where the fading coefficients were discretized into states, with decoding errors modelled as outages on these discrete thresholds.

All the works are either restricted to a single-process retransmission scheme or model the multi-process ARQ/HARQ using a single-server queue. 
These limitations worsen the modelling inaccuracies for systems with larger RTTs, and fail to address practical implementation aspects of slot-based 5G systems, where inescapable decoding complexity and non-negligible feedback delays over multiple transmissions significantly contribute to the DVP.
While some works, such as~\cite{Sahin2019Harq}, include waiting delays in their analysis, they argue that cumulative transmission delays dominate the total delay. However, in slot-based 5G systems, even a single slot for decoding and feedback can constitute at least 50\% of the RTT, making this assumption less valid.
These studies are information-theoretic, lacking considerations for resource allocation and modulation and coding schemes (MCS), or are not sufficiently aligned with 3GPP specifications.
This limits their practical applicability, as real-world systems must account for the effects of resource allocation, coding schemes, and feedback delays on system performance.

\subsection{Contributions}
Our contributions are summarised as follows:
\begin{enumerate} 
\item We propose a framework consisting of a delay model and an error model to accurately compute the DVP for ARQ and HARQ retransmission schemes in 5G. This framework has the potential to aid resource allocation and link adaptation algorithms targeting specific DVPs at given delay thresholds.
\item The delay model employs multi-server transmit queues, enabling support for multiple ARQ/HARQ processes while accounting for decoding and feedback delays. The model is grounded in 3GPP standards and incorporates realistic configurations, providing a key advancement over existing works.
\item The error model, while simple, uses realistic finite blocklength theory to evaluate the PER of ARQ and HARQ-IR with sufficient accuracy. The error model is isolated from the delay model, enabling the results to be used with measured PER values, further increasing the model's flexibility for practical use.
\item Our numerical evaluations demonstrate accuracy improvements over single-server models that rely on immediate feedback (IF) assumptions.
We analyse the effect of parameter tuning on DVP across various delay regimes and target delays and highlight the impact of resource allocation and packet sizes, especially under tight delay constraints. 
Additionally, we reveal the existence of an optimal arrival rate that maximises the system throughput. 
\end{enumerate}

The remainder of this work is organised as follows: In Section~\ref{sec:model}, we introduce the system model and the error model. 
In Sections \ref{sec:arq} and \ref{sec:harq}, we propose closed-form expressions and algorithms to compute the DVP for ARQ and HARQ retransmission schemes. Within each of these sections, we (1) discuss the queuing model and compute the steady-state queue probabilities, (2) compute the wait delay distribution, and (3) compute the service delay distribution and use it to calculate the DVP. 
% Section~\ref{sec:detArr} explains the adaptations necessary for deterministic arrivals.\todo[]{Remove/ Appendix?} 
Finally, in Section~\ref{sec:simulations}, we show the numerical evaluation in detail and conclude in Section~\ref{sec:conclusion}.

\section{System Model and Problem Statement}\label{sec:model}
\begin{figure*}[th]
\centering
\begin{minipage}[t]{.5\textwidth}
  \centering
  \includegraphics[width=\textwidth]{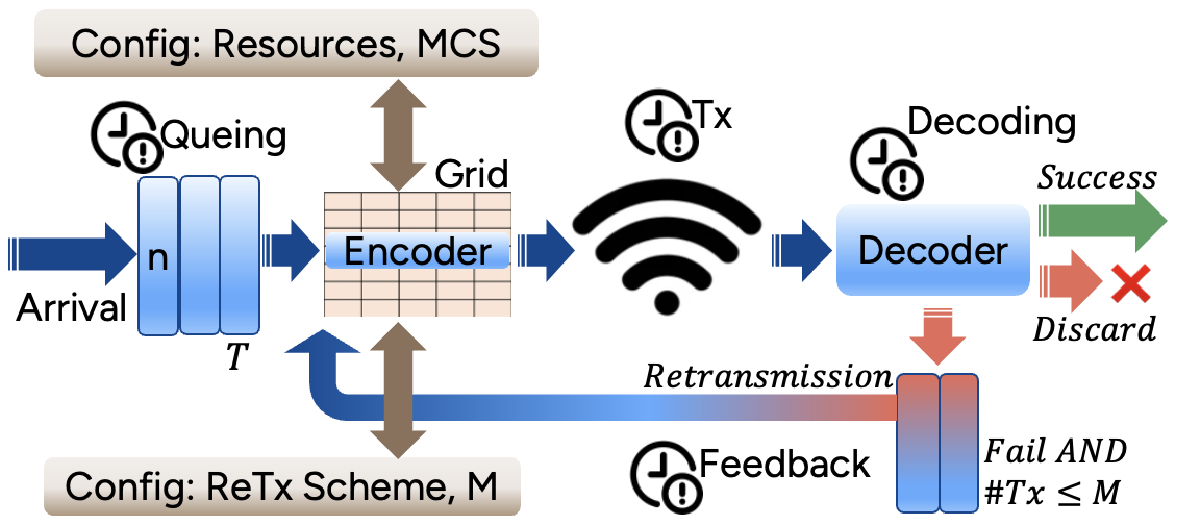}
  \caption{Illustration of the closed-loop communication system studied showing the retransmission process. Different delay components are shown where the packets experience them.}
  \label{fig:BasicModel}
\end{minipage}%
\hspace{0.03\textwidth}%
\begin{minipage}[t]{0.46\textwidth}
  \centering
  \includegraphics[width=\textwidth]{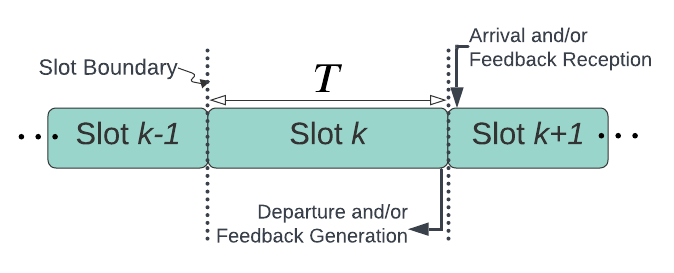}
  \caption{Timing diagram showing the order and positions of different slot-based arrival and departure events with respect to the corresponding slot boundary.}
  \label{fig:timingdiagram}
\end{minipage}
\end{figure*}

Consider a 5G communication system as depicted in Fig.~\ref{fig:BasicModel}. 
A User Equipment (UE) generates or receives uplink (UL) packets and queues them for transmission\footnote{The analysis and results apply to uplink and downlink scenarios; we focus on the uplink for clarity and consistency.}. 
These packets are sent to a dedicated gNB via a 5G-NR wireless link, utilizing a fixed number of resources scheduled to the UE in each time slot.
The packet is encoded over these pre-allocated frequency domain resource blocks (RB) using the configured modulation and coding scheme (MCS).

If the queue is non-empty, the UE uses the entire slot to transmit the head-of-the-queue packet. 
The gNB attempts to decode the packet using the implemented coding scheme. Successful packets are used for their intended purpose, and an acknowledgement is fed back in the downlink (DL)
The UE receives this feedback, and retransmission is triggered if necessary. 
For this, a static retransmission scheme is configured, and the packets are discarded after a maximum number of transmission attempts.

The process involves various delays, as shown in Fig.~\ref{fig:BasicModel}. 
Encoding delay is ignored, as it occurs only once per packet and can be performed while the packet waits in the queue. Similarly, propagation delays are neglected due to the short distances typical in high-reliability applications. 
However, if needed, one can include the encoding delay by subtracting it from the delay target, as it is endured only once per packet, and the propagation delay by adding it to the decoding delay, as they are always encountered as a pair.

% \begin{figure*}[th]
% \centering
% \begin{minipage}{.48\textwidth}
%   \centering
%   \includegraphics[width=\textwidth]{Figures/model.eps}
%   \caption{Illustration showing the communication process. The different delay components are shown.}
%   \label{fig:BasicModel}
% \end{minipage}%
% \hspace{0.01\textwidth}%
% \begin{minipage}{0.33\textwidth}
%   \centering
%   \includegraphics[width=\textwidth]{Figures/timingDiagram.png}
%   \caption{Timing diagram showing the positions and order of different arrival and departure events.}
%   \label{fig:timingdiagram}
% \end{minipage}%
% \hspace{0.01\textwidth}%
% \begin{minipage}{0.16\textwidth}
%   \centering
%   \includegraphics[width=\textwidth]{Figures/harq.png}
%   \caption{Illustration of the HARQ-IR process with 4 redundancy versions.}
%   \label{fig:harqprocess}
% \end{minipage}
% \end{figure*}

\subsection{System Model}
Arrivals (or generation) of packets of size $n$ bits are modelled to occur randomly with an arrival probability of $f$ in each time slot.
While earlier arriving packets are given initial transmission opportunities, they do not wait for the feedback of previously transmitted packets, forming multiple simultaneous transmit-retransmit processes.
These packets form a queue awaiting transmission opportunities, which is modelled as a multi-server queue, with each server representing a packet undergoing a transmission process.
The maximum queue size is $\qmax$, and the slot length is $T$.

New or retransmitted packets are added to the queue at the UE immediately after the slot boundary. 
Each packet transmission uses all time domain resources within the slot, with departures or feedback generation occurring at the gNB just before the subsequent slot boundary, as illustrated in Fig.~\ref{fig:timingdiagram}. Although the slot timings at the UE and gNB may not align perfectly in terms of absolute clock time due to propagation delay, their offset remains constant. Synchronization of slot indices is maintained through the application of a timing advance (TA) computed during the initial handshake, ensuring that a transmission in slot $k$ of the UE is received within slot $k$ at the gNB.
In this work, retransmissions always have priority, and retransmission schedules are added to the head of the queue. 

We assume that packet failures in the UL manifest as decoding failures at the gNB, and negative acknowledgements (NACKs) are always successfully transmitted in the DL for these failed packets. The maximum number of retransmissions allowed is denoted by $M$ corresponding. Depending on whether $M$ is finite or infinite, we refer to this as a \textit{truncated} or \textit{persistent} retransmission scheme, respectively. 
The packet experiences a decoding delay of $\ddecod$ regardless of success, and the feedback incurs a delay of $\delta$ before being received by the UE, both measured in slots.
Transmitted packets are stored separately outside of the queue for potential retransmissions, preventing queue overflow even when $\qmax < \infty$. Therefore, a failed packet sent in slot $k$ is up for retransmission in slot $k + \ddecod + \delta + 1$.
% \todo[inline]{cite decoding and feedback constraints and standards at gNB}

The packet error rate (PER) is modelled in two ways. First, we consider the ARQ scheme, where failures are independent and identically distributed (i.i.d.) across different packets and transmission attempts. Second, we consider the HARQ scheme, where failures are identical only between packets but not between transmission attempts. ARQ discards information from failed transmissions and retries decoding independently, whereas HARQ retains this information to improve decoding of subsequent attempts. HARQ can be implemented in various ways, for example, Chase Combining (CC) and Incremental Redundancy (IR)\cite{DAHLMAN2014299}. In this work, we focus on HARQ-IR, the method that is predominantly used today. While the PER for ARQ is denoted by $p$, the PER for different transmission attempts in HARQ is represented by the vector $\bm{p} = [p_1, p_2, \dots, p_M]$, where $p_{m+1} \leq p_m,\,\forall m= 1, 2, \dots, M$. Thus, ARQ can be considered a special case of HARQ, where $p_m = p,\,\forall m$.

The slot-based packet transmission model above suffices for computing DVP given a known PER. However, to calculate PER and fully characterize DVP, we model transmissions based on a simplified 3GPP specification.
OFDM Resources are allocated in quanta of resource blocks (RBs), the number of which is denoted by $\nrb$. 
Each RB contains 12 sub-carriers separated in frequency with a sub-carrier spacing (SCS) of $15 \times 2^\nu$ kHz, indexed by $\nu$, referred to as numerology~\cite{3gpp.38.211}.
One OFDM sub-carrier defines a so-called resource element (RE), the number of which is denoted by $\nre$, resulting in $\nre = 12\nrb$.
A slot of duration $T = 2^{-\nu}$ ms contains 15 time-domain symbols, yielding $12\times15 = 180$ symbols per RB per slot and a blocklength of $180\times\nrb$ for each transmission.
In practice, only 12 or 14 symbols are present in a slot instead of 15 due to the cyclic prefix that we ignore in this work for simplicity. We also fix $\nu = 0$, setting SCS to 15 kHz. Nonetheless, these assumptions can be easily removed using the parameterised slot duration $T$ and symbols per slot. In addition, we assume a transport block size (TBS) not exceeding 8448 bits, avoiding code block segmentation~\cite{3gpp.38.212}.

Packets are transmitted over a Rayleigh block fading channel, with the assumption that the fading is constant within a time slot and changes independently between slots. Let $\mathcal{N}$ be the complex AWGN noise, and $h_k$ the Rayleigh-distributed fading coefficient at slot $k$ with $\mathbb{E}[\vert h_k\vert^2] = \hvar$. The channel input/output relation is:
$$y_k = h_k x_k + \mathcal{N}.$$
The transmission is carried out with a fixed MCS from the 3GPP 38.214-Table 5.1.3.1-1~\cite{3gpp.38.214}, which directly gives the spectral efficiency $\eta$ defined as the rate per symbol. Let $V$ denote the channel dispersion coefficient, $\snrInst$ the instantaneous SNR at the receiver, and $Q(x)$ the Q-function. The PER of an ARQ scheme with a given instantaneous SNR is given by\cite{polyanskiy2010FBL,Wei2013FBLblockfading}:
\begin{equation}
    p(\snrInst)=Q\left(\left(\log_2\left(1+\snrInst\right)-\eta\right)\sqrt{\frac{\nre}{V}}\right).\label{eq:secModel_perARQ_instSNR}
\end{equation}
% \todo[inline]{Explain about this more?}
% \todo[inline]{there is slightly improved version of this in \cite{devassy2014finite} where they use $(\frac{\log n}{2n})$ for the additional term $+O(\frac{\log n}{n})$ in the FBL approximation. Consider using this?}

HARQ-IR initially encodes with a low-rate code and generates a finite number $M$ of redundancy versions (RVs) through puncturing~\cite{HarqIRCommEngDeskRef}. 
Various methods exist for generating these RVs~\cite{HarqIRCommEngDeskRef,szczecinski2013rate,heidarzadeh2018systematic}, which differ in aspects such as RV overlap, whether the packet length changes with each RV and the number of higher layer packets combined in each physical layer packet. 
For simplicity, in this work, we assume that RVs are of equal length and non-overlapping, as illustrated in Fig.~\ref{fig:harqprocess2}, with 4 RVs corresponding to $M=4$.
Each RV has a coding rate of $Mr_c$ where $r_c$ is the coding rate of the unpunctured version. These are transmitted consecutively, thus completing the unpunctured code by the final attempt. 
Since HARQ uses previous transmissions for decoding, we get an effective spectral efficiency of $\nicefrac{\eta}{m}$ and a blocklength of $m\nre$ after $m^{\text{th}}$ transmission.
Using this, the PER for the $m^{\text{th}}$ transmission could be written as:
\begin{equation}
p_m(\vec{\snrInst})=Q\left(\left(\left(\frac{1}{m}\sum_{i=1}^m\log_2\left(1+\snrInst_{i}\right)\right)-\frac{\eta}{m}\right)\sqrt{\frac{m\nre}{V}}\right).\label{eq:secModel_perHARQ_instSNR}
\end{equation}
Here, $S_i$ is the SNR at the $i^{\text{th}}$ attempt, $\vec{\snrInst}=\{S_i\}$. We take the average capacity because HARQ decodes all the $m$ attempts jointly. This approach models a HARQ sufficiently well in terms of (only) the parameters of our interest while being much simpler than some existing approaches.
\begin{figure}[t]
\centering
\includegraphics[width=0.95\linewidth]{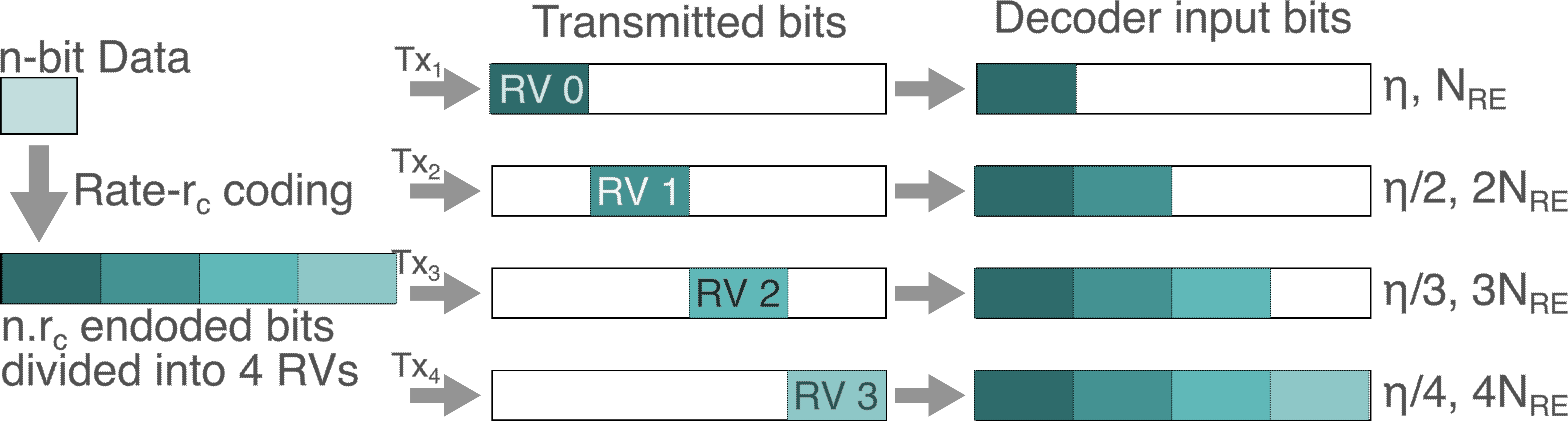}
\caption{Illustration of the HARQ-IR process. The $r_c$-rate channel coded bits are punctured to obtain 4 equal-length and non-overlapping RVs with a coding rate of $4r_c$ each. Effective spectral efficiency and block length after each decoding attempt are shown.}
\label{fig:harqprocess2}
\end{figure}

Recall that with a Rayleigh fading channel,  $\snrInst$ is exponentially distributed. Rewrite $S = \frac{\snr}{\hvar }\vert h\vert^2$, where $\snr \coloneqq \mathbb{E}[S]$, the average SNR at the receiver. To derive the PER for the average SNR $\snr$, one can compute the expectation over the distribution of the instantaneous SNR. 
For ARQ, we have,
\begin{equation}
p=\frac{1}{\hvar}\int_{0}^{\infty}p(\snrInst)\,e^{{-s}/{\hvar}}\,\mathrm{d}s.\label{eq:secModel_perARQ}
\end{equation}

Similarly, the PER $p_m$ for HARQ can be expressed as an $m$-dimensional integral using the joint distribution of $m$ i.i.d. SNR variables. While the joint PDF factorizes into a product, the integral remains inseparable due to the non-separability of the Q-function from \eqref{eq:secModel_perHARQ_instSNR}. 
\begin{equation}
p_m=\left(\frac{1}{\mu_h^2}\right)^m\int_{0}^{\infty}\dots\int_{0}^{\infty}p_m(\vec{\snrInst})\,\prod_{i=1}^{m}\left(e^{{-s_i}/{\mu_h^2 }}\,\mathrm{d}s_i\right)\label{eq:secModel_perHARQ}
\end{equation}

These expressions can be computed using numerical integration, Monte Carlo methods, or upper bounds on the Q-function. Though some bounds yield closed-form expressions, they are typically accurate only in limited parameter ranges and are unsuitable for varying resource allocation, packet sizes, and MCS choices. Therefore, in Section~\ref{sec:simulations}, we adopt a simple Monte Carlo approach. However, since our error model and queuing model are independent, any alternative method—analytical, numerical, or experimental—can be used to obtain the PER needed for the DVP computation.

% \begin{multline}
%     \quad p_m=\int_{0}^{\infty}\frac{1}{\hvar }\,
%     Q\Bigg(\bigg(\color{red}\frac{1}{m}\sum_{i=1}^m\color{black}\log_2\bigg(1+\frac{\snr}{\hvar }s_{\color{red}i\color{black}}\bigg)\\
%     -\frac{\eta}{m}\bigg)\sqrt{\frac{m\nre}{V}}\Bigg)\,e^{{-s}/{\hvar }}\,\mathrm{d}s.\qquad\text{(for HARQ)}\label{eq:secModel_perHARQ}
% \end{multline}
% \begin{equation}
% p_m=\left(\frac{1}{\mu_h^2}\right)^m\int_{0}^{\infty}\dots\int_{0}^{\infty}Q\left(\left(\frac{1}{m}\sum_{i=1}^m\log_2\left(1+\frac{\gamma}{\mu_h^2}s_{i}\right)\\
%     -\frac{\eta}{m}\right)\sqrt{\frac{mN_{RE}}{V}}\right)\,\prod_{i=1}^{m}\left(e^{{-s_i}/{\mu_h^2 }}\,\mathrm{d}s_i\right)\nonumber
% \end{equation}

\begin{table*}[t]
\centering
\renewcommand{\arraystretch}{1.25}
\caption{Table of abbreviations.}
\label{Tab:abbr}
\begin{tabular}{|ll|ll|ll|}
\hline
DVP & Delay Violation Probability    & IR  & Incremental Redundancy   & PER  & Packet Error Rate \\
MCS & Modulation and Coding Scheme & ARQ & Automatic Repeat Request & HARQ & Hybrid ARQ        \\
SCS & sub-carrier Spacing   & RE  & Resource Element         & RB   & Resource Block    \\ \hline
\end{tabular}
\end{table*}
\begin{table*}[t]
\centering
\renewcommand{\arraystretch}{1.25}
\caption{Table of notations.}
\label{Tab:notations}
\begin{tabular}{|ll|ll|ll|}
\hline
$n$       & Packet length (bits)     & $T$      & Slot duration (s)             & $f$                    & Frequency of random arrivals, with $f<1$   \\
$\qmax$   & Maximum queue size       & $M$      & Maximum transmission attempts & $c$                    & Cycle of deterministic arrivals (in slots) \\
$\nrb$    & Number of RBs            & $\nre$   & Number of REs                 & $k$                    & Slot index                                 \\
$\dwait$  & Waiting delay (in slots) & $\dserv$ & Serving delay (in slots)      & $\dtot$                & Total delay (in slots)                     \\
$d$       & Delay target (s)         & $\delta$ & Feedback delay (in slots)     & $\ddecod$              & Decoding delay (in slots)                  \\
$m$       & Transmission index       & $p$      & PER of the ARQ scheme         & $p_m$                  & PER of $m^{\text{th}}$ attempt of HARQ-IR  \\
$\dvp(d)$ & DVP for target $d$       & $\snr$   & Average SNR                   & $\hvar$                & Mean of $\vert h\vert^2$                   \\
$V$       & Channel dispersion       & $\eta$   & Spectral efficiency           & \multirow{2}{*}{$k_d$} & Maximum transmission attempts possible     \\
          &                          &          &                               &                        & without violating the target delay         \\ \hline
\end{tabular}
\end{table*}

\subsection{Problem Statement}
We consider three delay components: wait delay ($\dwait$), service delay ($\dserv$), and total delay ($\dtot$), all random variables measured in slots. Wait delay is the time between a packet’s arrival and its first transmission opportunity. Service delay is the time from the first transmission to the final transmission, and total delay is their sum:
\begin{equation*}
    \dtot = \dwait+\dserv.
\end{equation*}
As the feedback of the successful (or discarded) transmission is irrelevant, for $m$ transmission attempts:
\begin{equation}
    \dserv = m{+m\ddecod}+(m-1)\delta.\label{eq:SecModel_Dserv}
\end{equation}
The delay violation probability (DVP) associated with a delay target $d$ is defined as:
\begin{equation}
\dvp(d) = \mathbb{P}\left(D>d\right).\label{eq:SecModel_dvpDefinition}
\end{equation}
Our goal is to characterize the DVP as a function of various system parameters for ARQ and HARQ-IR. 
% While we primarily focus on random packet arrivals, we also briefly address deterministic arrivals.\todo{appendix}

\subsubsection*{Warm-up: Bounded Arrival Retransmission Model}
For illustrative purposes, we consider a simplified ARQ scheme with no waiting time ($\dwait = 0$), resulting in $\dtot = \dserv$. 
This scheme, referred to as the Bounded Arrival Retransmission (BAR) scheme, assumes arrivals are either deterministic with a cycle $c\geq M \cdot \text{RTT}$ or are triggered only after the successful transmission of the previous packet, ensuring that a queue never forms. 
Here, the round-trip time (in slots) is given by $\text{RTT}=1+\ddecod+\delta$. 
Let $k_d$ denote the maximum number of transmission attempts allowable without violating the delay target $d$. Thus, the DVP corresponds to $k_d$ failed transmissions, that is, a probability of $p^{k_d}$. 
We have,
\begin{align}
    k_d &= \left\lfloor \frac{\nicefrac{d}{T} + \delta}{\delta + \ddecod + 1} \right\rfloor, \label{eq:secModel_Kd} \\
    \dvp(d) &= p^{k_d}. \label{eq:secModel_BARdvp}
\end{align}
It is worthwhile to observe that $k_d = \left\lfloor \nicefrac{d}{T} \right\rfloor$ when $\delta = \ddecod = 0$, where each attempt takes exactly 1 slot.\\

Note that with minimal effort, one can modify the general ARQ/HARQ results for a deterministic arrival process with a cycle time of every $c = f^{-1}$ time slots. We omit this part due to space constraints.
The results can be extended with minimal adjustments to a deterministic arrival process with a cycle time of $c = f^{-1}$ time slots. This extension is omitted due to space constraints.

We summarize important abbreviations in TABLE~\ref{Tab:abbr} and notations in TABLE~\ref{Tab:notations}.

\allowdisplaybreaks
\section{ARQ: Independent Retransmissions}\label{sec:arq}
Consider an ARQ retransmission scheme with independent failure events. The arrivals occur randomly with a probability $f$, forming a FIFO queue. Failed transmissions are added back to the head of the queue after $\delta$ slots, as described in Section~\ref{sec:model}. We first compute the wait delay and service delay separately and then derive the total delay. 

\subsection{Queing Model}
The UE buffer is modelled as a discrete-time Markov chain where the states represent the queue length, including the packet currently being served. State observations are made at the slot boundary. As mentioned in the system model, a departure occurs with every transmission attempt (with a probability of $1$). This is immediately followed, in order, by a retransmission schedule for failed packets, the slot boundary and the new arrivals. The retransmission is scheduled for a slot $\delta+1$ after the corresponding transmission slot. 

It is straightforward to see that the state transitions from state $q$ to $q+1$ corresponds to an arrival at the immediate next slot, say $k$, and a transmission failure at slot $k-\delta-1$. 
The transmission failure at slot $k-\delta-1$ is given by $p\left(1-\pi_0(1-f)\right)$ where $\pi_0$ denotes the probability of an empty queue. Here we take care of the fact that either the queue needs to be non-empty or there should be a fresh arrival for a transmission to happen in the first place to get a failed transmission.
Thus we obtain the Markov chain shown in Fig.\ref{fig:markovChain_modified}, where $\hat{p} = p\left(1-\pi_0(1-f)\right)$ and $\hat{p}' = 1-\hat{p}$. 
The residual self-loop probabilities of $1-f\hat{p}$ for state~0 and $f\hat{p}' + f'\hat{p}$ for all other states are not shown in the figure.

\begin{figure*}[t]
\centering
\begin{tikzpicture}
    % Define the nodes
    \node[draw, circle,minimum size=1cm] (s0) at (0,0) {0};
    \node[draw, circle,minimum size=1cm] (s1) at (3,0) {1};
    \node[draw, circle,minimum size=1cm] (s2) at (6,0) {2};
    \node[draw, circle,minimum size=1cm] (s3) at (9,0) {3};
    
    \node at (10.5,0) {\Huge\dots};
    % Arrows to the next state (curved up)
    
    \draw[->] (s0) to[bend left] node[midway, above] {$f\hat{p}$} (s1);
    \draw[->] (s1) to[bend left] node[midway, above] {$f\hat{p}$} (s2);
    \draw[->] (s2) to[bend left] node[midway, above] {$f\hat{p}$} (s3);

    % Arrows to the previous state (curved down)
    \draw[->] (s1) to[bend left] node[midway, below] {$f'\hat{p}'$} (s0);
    \draw[->] (s2) to[bend left] node[midway, below] {$f'\hat{p}'$} (s1);
    \draw[->] (s3) to[bend left] node[midway, below] {$f'\hat{p}'$} (s2);

    % Self-loop arrows
    \draw[->] (s0) edge[loop left] node[left] {} (s0);
    \draw[->] (s1) edge[loop above] node[above] {} (s1);
    \draw[->] (s2) edge[loop above] node[above] {} (s2);
    \draw[->] (s3) edge[loop above] node[right] {} (s3);
\end{tikzpicture}
\caption{Markov chain with queue length as states as observed at the slot boundary for an ARQ scheme. The transition probabilities, except the self-loop probabilities, are shown.}
\label{fig:markovChain_modified}
\end{figure*}
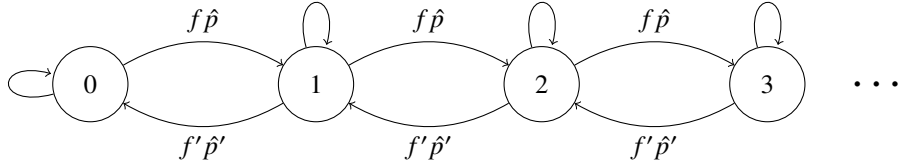

\subsubsection*{Steady state probabilities}
We now focus on determining the steady-state probabilities (SSP) of the queue. Rather than directly finding the SSP of the initial Markov chain, we bound it using the SSP of a modified Markov chain representing an \textit{immediate feedback scenario} with $\delta = \ddecod = 0$, which is mathematically more tractable.

In such a scenario, the retransmission happens in the immediate next slot. One can alternatively consider that the departure occurs with a probability of $1-p$ instead of $1$, and there is no retransmission scheduled. 
The events between two state observations are an arrival at the start of the slot and a departure at the end of the slot with probabilities $f$ and $1-p$, respectively. 
Thus, the transition from state $q$ to $q+1$ comes with a probability $fp$, larger than $f\hat{p}$. 
Therefore, the CCDF of the queue length of this adapted Markov chain stochastically dominates that of the Markov chain from \ref{fig:markovChain_modified}. We will elaborate on this soon and use it to bind the violation probability of the wait delay. 

Let $\pi_i$ denote the steady-state probability of the adapted Markov chain. Lemma~\ref{lemma_arq_queue} provides the CCDF of the queue length.

\begin{lemma}\label{lemma_arq_queue}
    The CCDF of the queue length $Q$ is given by:
    \begin{equation}
        \mathbb{P}\left(Q>q\right) = \left(\dfrac{fp}{(1-f)(1-p)}\right)^{q+1}\label{eq:secARQ_QueueCCDF}
    \end{equation}
\end{lemma}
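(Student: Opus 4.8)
The plan is to treat the adapted \emph{immediate-feedback} chain as a homogeneous birth--death chain on the states $\{0,1,2,\dots\}$, read off its geometric stationary law, and sum the tail. First I would pin down the one-step transition probabilities. Between two consecutive slot-boundary observations exactly two Bernoulli events occur: an arrival (probability $f$) and a successful departure (probability $1-p$), and since at most one packet enters and at most one leaves per slot, the queue length changes by at most $\pm 1$. Hence, writing $f' = 1-f$ and $p' = 1-p$, for every $q \ge 1$ the chain moves $q \to q+1$ with probability $f p$ (an arrival together with a decoding failure, so no departure), $q \to q-1$ with probability $f' p'$ (no arrival together with a success), and stays put otherwise; and from state $0$ the only way to reach state $1$ is a freshly-arrived packet whose transmission fails, again probability $f p$. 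So the chain is a genuine birth--death chain with constant birth probability $a := fp$ and constant death probability $b := f'p' = (1-f)(1-p)$.

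Next I would solve the global-balance equations by a cut argument between states $q$ and $q+1$: $\pi_q\, a = \pi_{q+1}\, b$, i.e. $\pi_{q+1} = \rho\, \pi_q$ with $\rho := a/b = \dfrac{fp}{(1-f)(1-p)}$, so $\pi_q = \rho^{q}\,\pi_0$ for all $q \ge 0$. Normalising via $\sum_{q\ge 0}\pi_q = 1$, using $\rho < 1$ (which holds in the stable operating regime of interest; recall $f<1$ is assumed throughout), gives the geometric sum $\pi_0 = 1-\rho$, hence $\pi_q = (1-\rho)\rho^{q}$. Therefore
\begin{equation*}
  \mathbb{P}(Q > q) = \sum_{i = q+1}^{\infty} (1-\rho)\,\rho^{i} = \rho^{\,q+1} = \left(\frac{fp}{(1-f)(1-p)}\right)^{q+1},
\end{equation*}
which is exactly \eqref{eq:secARQ_QueueCCDF}; a sanity check at $q=0$ gives $\mathbb{P}(Q>0) = 1-\pi_0 = \rho$, as expected.

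The computation is largely routine, so the points needing care are bookkeeping rather than depth. The first is checking that state $0$ behaves like a reflecting birth--death boundary, i.e. that the ``a transmission happens only if the queue is non-empty or a fresh packet arrives'' caveat does not break homogeneity of the birth/death probabilities; the argument above shows it does not, since in the immediate-feedback model an empty queue can only grow through a freshly-arrived packet that fails, which carries the same probability $fp = a$. The second is the stability condition $\rho<1$ required for the geometric normalisation: I would either impose it as a standing assumption or, if one also wants the formula to serve as the bound used later for the finite-buffer ($\qmax < \infty$) and general ($\delta,\ddecod > 0$) chains, note that it is precisely the unbounded, immediate-feedback relaxation whose CCDF dominates the others, so $\rho^{q+1}$ is the correct expression to carry forward into the wait-delay analysis.
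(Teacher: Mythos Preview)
Your proposal is correct and follows essentially the same approach as the paper: identify the immediate-feedback chain as a birth--death chain with constant up/down probabilities $fp$ and $(1-f)(1-p)$, obtain the geometric stationary law $\pi_i=(1-\rho)\rho^i$, and sum the tail. The only cosmetic difference is that the paper iterates the full balance equations state by state whereas you invoke the cut (local-balance) argument directly; the content is the same.
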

\begin{proof}
 \begin{align*}
    \pi_0 &= (1-fp)\pi_0 + f'p'\pi_1, \\
    \Rightarrow \pi_1 &= \dfrac{fp}{f'p'}\pi_0.\\
    \pi_1 &= fp\pi_0+ (1-fp-f'p')\pi_1+ f'p'\pi_2, \\
    \Rightarrow \pi_2 &= \dfrac{fp}{f'p'}\pi_1,\\
     &=  \left(\dfrac{fp}{f'p'}\right)^2\pi_0.
\end{align*}
Continuing with the same steps, we get
\begin{align*}
    \!\pi_i &= \left(\dfrac{fp}{f'p'}\right)^{i}\pi_0,\,\forall i\geq0,
    \intertext{that is,}
\!\pi_i &= \left(\dfrac{fp}{(1-f)(1-p)}\right)^{i}\!\!\left(1-\dfrac{fp}{(1-f)(1-p)}\right)\!,\forall i\geq0.\numberthis\label{eq:secARQ_ssp}
\end{align*}
Let $Q$ be the random variable denoting the queue length in the immediate feedback scenario, the distribution of which is given in \eqref{eq:secARQ_ssp}. Thus,
\begin{align*}
    \mathbb{P}\left(Q>q\right)&=\left(1-\dfrac{fp}{(1-f)(1-p)}\right)\\
    &\qquad\qquad\quad\sum_{i=q+1}^{\infty}\left(\dfrac{fp}{(1-f)(1-p)}\right)^{i},\,\forall i\geq0.\\
    &=\left(\dfrac{fp}{(1-f)(1-p)}\right)^{q+1}.
\end{align*}   
\end{proof}

% \paragraph*{Stability}
Observe that the queue is stable if the arrival rate does not exceed the departure rate, i.e. if $f \leq 1 - p$ or equivalently if $\frac{p}{1-f} \leq 1$. This can also be derived by computing the expected queue length $\Bar{\pi}$:
\begin{align*}
    \Bar{\pi} &=\left(1-\dfrac{fp}{(1-f)(1-p)}\right)\sum_{i=0}^{\infty}i\left(\dfrac{fp}{(1-f)(1-p)}\right)^{i},\\
    &=\left(1-\dfrac{fp}{(1-f)(1-p)}\right)\frac{(1-f)(1-p)}{(1-f-p)^2}fp,\\
    &=\frac{fp}{1-f-p},
\end{align*}
which implies stability when:
\begin{equation}
    \frac{p}{1-f}\leq1.\numberthis\label{eq:secARQ_zerofdbk_stability}
\end{equation}

\subsection{Wait delay}
It is clear from \eqref{eq:secARQ_QueueCCDF} that the queue length distribution of the immediate feedback scenario with PER $p$ stochastically dominates the queue length distribution of a delayed feedback scenario with a PER $\hat{p} < p$ in a first-order stochastic dominance sense~\cite{whang2019econometric}.

Let $\dwait\vert Q$ be the wait delay, conditioned on the queue length. Thus, 
\begin{align*}
    \mathbb{P}(\dwait =k)&=\sum_q\,\pi_q\mathbb{P}(\dwait ={k}\vert Q=q).
\end{align*}
As the wait delay increases with queue size, the stochastic dominance of the queue length of the delayed feedback scenario also implies the stochastic dominance of the corresponding wait delay. This will also become evident from Lemma~\ref{lemma_arq_wait}, where the upper bound, which is the CCDF of the wait delay with immediate feedback, decreases as the PER decreases. 
This upper bound is found to be sufficiently tight from simulations. This is because, unlike the service delay, the wait delay measured from arrival to the first transmission is largely unaffected by the feedback.

Recall that the sum of i.i.d. geometric random variables follows a negative binomial distribution\cite{johnson2005univariate}. 
For $X_{q}$ representing such a sum, the probability mass function is given by:
\begin{equation}
    \mathbb{P}(X_{q}=k)=\dfrac{(k-1)!}{(q-1)!(k-q)!}(1-p)^qp^{k-q}.\label{eq:secARQ_geoSum}
\end{equation}
% We now bound $\dwait$ by marginalizing the conditional distribution over steady-state queue probabilities.
 % \todo[inline]{Discuss more or prove mathematically the stochastic dominance?}
\begin{lemma}\label{lemma_arq_wait}
    The CCDF of the wait delay is given by:
    \begin{equation}
  \mathbb{P}(\dwait >j) \leq\frac{f}{1-p}\left(\frac{p}{1-f}\right)^{j+1}.\label{eq:secARQ_waitdelay} 
    \end{equation}
\end{lemma}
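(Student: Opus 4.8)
The plan is to condition on the queue length $Q$ at the slot boundary just before the tagged packet's arrival, then sum the conditional wait-delay tail over $q$ using the steady-state probabilities $\pi_q$ from Lemma~\ref{lemma_arq_queue}. First I would observe that, conditioned on finding $q$ packets already in the (single-server-equivalent, immediate-feedback) queue, the tagged packet gets its first transmission opportunity only after those $q$ packets depart; since in the immediate-feedback model each slot produces a departure with probability $1-p$ independently, the number of slots needed for $q$ departures is a sum of $q$ i.i.d.\ geometric random variables, i.e.\ the negative binomial of \eqref{eq:secARQ_geoSum}. Hence $\mathbb{P}(\dwait > k \mid Q = q) = \mathbb{P}(X_q > k)$ for $q \geq 1$, and $\mathbb{P}(\dwait > k \mid Q = 0) = 0$ (more care is needed about whether the arriving slot itself counts, which shifts indices by one — I would line this up with the convention in Fig.~\ref{fig:timingdiagram} so the bound comes out as stated).

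Next I would write
\[
\mathbb{P}(\dwait > k) = \sum_{q \geq 1} \pi_q \, \mathbb{P}(X_q > k),
\]
substitute $\pi_q = (1-\rho)\rho^q$ with $\rho = \tfrac{fp}{(1-f)(1-p)}$, and interchange the order of summation. The cleanest route is to sum over $q$ first for fixed "number of failures": $\mathbb{P}(X_q > k)$ is the probability that fewer than $q$ successes occur in the first $k$ trials, so $\sum_q \pi_q \mathbb{P}(X_q > k)$ becomes, after swapping sums, a single geometric-type series in the trial outcomes weighted by $\rho^q$. Carrying that through, the $q$-sum collapses and what remains telescopes to a geometric series in $k$ with ratio $\tfrac{p}{1-f}$, yielding the right-hand side of \eqref{eq:secARQ_waitdelay}. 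An alternative, possibly shorter, route is to recognize that a geometric mixture of negative binomials is itself geometric: if $Q \sim \mathrm{Geom}(1-\rho)$ and $X_Q$ is the sum of $Q$ i.i.d.\ $\mathrm{Geom}(1-p)$ waiting times, then $X_Q$ is geometric with an explicitly computable parameter, and its tail is a clean power of $k$; I would compute that parameter via probability generating functions ($G_{X_Q}(z) = G_Q(G_{\mathrm{Geom}}(z))$) and read off the tail.

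Finally I would invoke the stochastic-dominance remark preceding the lemma: the true delayed-feedback queue has effective PER $\hat p = p(1-\pi_0(1-f)) < p$, and since the displayed expression is increasing in $p$, the immediate-feedback computation is a genuine upper bound on the actual wait-delay CCDF, which justifies the "$\leq$" rather than "$=$". The main obstacle I anticipate is not the algebra of the double sum but pinning down the off-by-one conventions — exactly which slot the arrival lands in relative to the departure at the end of the slot, whether $Q$ includes the packet currently in service, and whether "first transmission opportunity" is counted inclusively — because these determine whether the prefactor is $\tfrac{f}{1-p}$ and the exponent is $k+1$ rather than $k$ or $k+2$. I would resolve this by carefully tracking a single packet through Fig.~\ref{fig:timingdiagram} and checking the bound against the degenerate case $k$ small (e.g.\ the probability of any wait at all should reduce to $\mathbb{P}(Q \geq 1)$-type quantities consistent with \eqref{eq:secARQ_QueueCCDF}).
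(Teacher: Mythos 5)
Your primary route --- condition on $Q$, substitute the immediate-feedback steady-state law $\pi_q$ from Lemma~\ref{lemma_arq_queue}, treat $\dwait\mid Q=q$ as a negative binomial via \eqref{eq:secARQ_geoSum}, and exchange the order of summation --- is precisely the paper's proof; the paper carries out your ``swap'' explicitly by first evaluating the inner $q$-sum $Z(k)=\sum_{q=1}^k\pi_q\,\mathbb{P}(D_w=k\mid Q=q)$ for each fixed $k$ and then summing a geometric series in $k$. The probability-generating-function route you offer as an alternative is genuinely different from the paper and is the shorter of the two: with $\rho=\frac{fp}{(1-f)(1-p)}$, composing $G_Q(z)=\frac{1-\rho}{1-\rho z}$ with the shifted-geometric PGF $\frac{(1-p)z}{1-pz}$ gives the rational function $G_{X_Q}(z)=\frac{(1-\rho)(1-pz)}{1-az}$ with $a=p+\rho(1-p)=\frac{p}{1-f}$, from which $\mathbb{P}(X_Q>k)=\frac{(1-\rho)\rho(1-p)}{1-a}\,a^k=\frac{f}{1-p}\bigl(\frac{p}{1-f}\bigr)^{k+1}$ follows in one line, with no need to manipulate binomial sums. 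One small correction to your phrasing: $X_Q$ is not geometric in the strict sense --- it carries an extra atom of mass $1-\rho$ at $0$ --- but it is a zero-modified geometric, so its survival function is still a pure power of $k$, which is all you use. You have also correctly located where the $\leq$ comes from: the immediate-feedback queue stochastically dominates the delayed-feedback one (effective PER $\hat p<p$), so the exact immediate-feedback tail is an upper bound, exactly as the paper argues in the remark preceding the lemma.
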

\begin{proof}
For the immediate feedback scenario, the number of transmissions attempted by a packet in the queue is distributed geometrically. Thus, $\mathbb{P}(D_w\vert Q)$ is given by the sum of $Q$ iid geometrically distributed random variables. We have,
\begin{align*}
\mathbb{P}(\dwait\!>\!j)&\leq 1\!-\!\left(\pi_0+\sum_{k=1}^{j}\sum_{q=1}^{k}\pi_q\,\mathbb{P}(D_w=k\vert Q=q)\right),\,\forall j\geq0.
\end{align*}
 Here, $\pi_0$ represents an empty queue. Let $Z(k)$ denote the inner sum. Expanding with \eqref{eq:secARQ_geoSum}, we have:
 \begin{equation}
Z(k)=\sum_{q=1}^{k}\pi_q\dfrac{(k-1)!}{(q-1)!(k-q)!}(1-p)^qp^{k-q},\,\forall k\geq1.
 \end{equation}
\begin{align*}
&=\sum_{q=1}^{k} 
\left( \frac{f \, p}{(1 - f) \, (1 - p)} \right)^q 
\left(1 - \frac{f \, p}{(1 - f) \, (1 - p)} \right)\\&\hspace{8em}
\left( \frac{(k - 1)!}{(k - q)!\,(q - 1)!} \right)
(1 - p)^q \, p^{k - q},\\
&=\left(1 - \frac{f \, p}{(1 - f) \, (1 - p)} \right)\sum_{q=0}^{k-1} 
\left( \frac{f \, p}{(1 - f) \, (1 - p)} \right)^{q+1} \\&\hspace{6em}
\left( \frac{(k - 1)!}{(k - q-1)!\,(q)!} \right)
(1 - p)^{q+1} \, p^{k - q-1},\\
% &=\left(1 - \frac{f \, p}{(1 - f) \, (1 - p)} \right)\sum_{q=0}^{k-1} 
% p^{q+1}\left( \frac{f}{(1 - f)} \right)^{q+1} 
% \left( \frac{(k - 1)!}{(k - q-1)!\,(q)!} \right) p^{k - q-1}\\
&=\left(1 - \frac{f \, p}{(1 - f) \, (1 - p)} \right)
p^{k}
\sum_{q=0}^{k-1} 
\left( \frac{f}{1 - f} \right)^{q+1}\\&\hspace{13em}
\left( \frac{(k - 1)!}{(k - 1 -q)!\,(q)!} \right),\\
&=\left(1 - \frac{f \, p}{(1 - f) \, (1 - p)} \right)
p^{k}\left( \frac{f}{1 - f} \right)
% \\&\hspace{12.5em}
\sum_{q=0}^{k-1} {\binom{k-1}{q}}
\left( \frac{f}{1 - f} \right)^{q},\\
&=\left(1 - \frac{f \, p}{(1 - f) \, (1 - p)} \right)
p^{k}\left( \frac{f}{1 - f} \right)
% \\&\hspace{15.5em}
\left(1+ \frac{f}{1 - f} \right)^{k-1},\\
% &=f\,\left( \frac{p}{1 - f} \right)^{k}
% \left(1 - \frac{f \, p}{(1 - f) \, (1 - p)} \right)\\
% \Rightarrow\mathbb{P}(\dwait >k) &= 1-\left(\pi_0+\sum_{i=1}^{k}\mathbb{P}(\dwait =k)\right)
&=f\,\left( \frac{p}{1 - f} \right)^{k}
\frac{1-f-p}{(1 - f) \, (1 - p)}.
\end{align*}
\begin{equation*}
 \Rightarrow\mathbb{P}(\dwait >j) \leq 1-\left(\pi_0+\sum_{k=1}^{j}Z(k)\right),  
\end{equation*}
\begin{align*}
% \Rightarrow\mathbb{P}(\dwait >j) &\leq 1-\left(\pi_0+\sum_{k=1}^{j}Z(k)\right)\\
&=1-\left(\pi_0+\sum_{k=1}^{j}f\,\left( \frac{p}{1 - f} \right)^{k}
\frac{1-f-p}{(1 - f) \, (1 - p)}\right),\\
% \Rightarrow\mathbb{P}(\dwait >j) &= 1-\left(\frac{1-f-p}{(1 - f) \, (1 - p)}+\sum_{k=1}^{j}f\,\left( \frac{p}{1 - f} \right)^{k}
% \frac{1-f-p}{(1 - f) \, (1 - p)}\right)\\
% \Rightarrow\mathbb{P}(\dwait >j) &= 1-\left(\frac{1-f-p}{(1 - f) \, (1 - p)}\right)\left(1+\sum_{k=1}^{j}f\,\left( \frac{p}{1 - f} \right)^{k}\right)\\
 &= 1-\left(\frac{1-f-p}{(1 - f) \, (1 - p)}\right)\left(1 + f\sum_{k=1}^{j}\left( \frac{p}{1 - f} \right)^{k}\right),\\
 &= 1-\left(\frac{1-f-p}{(1 - f) \, (1 - p)}\right)\\&\hspace{6em}\left(1 + \frac{fp}{1-f-p}\left(1-\left(\frac{p}{1-f}\right)^j\right)\right),\\
&= 1-\left(1-f-p+fp-fp\left(\frac{p}{1-f}\right)^j\right)\\&\hspace{13.5em}\left(\frac{1}{(1 - f) \, (1 - p)}\right),\\
&=\frac{fp}{(1-f)\,(1-p)}\left(\frac{p}{1-f}\right)^j,\\
&=\frac{f}{1-p}\left(\frac{p}{1-f}\right)^{j+1}.
\end{align*}
\begin{equation}
    \Rightarrow\mathbb{P}(\dwait >j) \leq\frac{f}{1-p}\left(\frac{p}{1-f}\right)^{j+1}.\nonumber
\end{equation}
\end{proof}

\subsection{Delay Violation Probability}\label{NoIR_dvp}
To get the DVP, we combine the upper bound of $\dwait$ with the service delay. The service delay is geometrically distributed based on the failure probability at the corresponding attempt, with values depending on $\delta$. It is given by:
\begin{align}
    \mathbb{P}\left(\dserv =k(\ddecod+1)+\delta(k-1)\right) &= p^{k-1}(1-p).\label{eq:secARQ_servDelay}\\
    \mathbb{P}\left(\dserv >k(\ddecod+1)+\delta(k-1)\right) &= p^{k}.\nonumber
\end{align}
Recall $k_d$ defined as the maximum number of transmissions possible before the service delay alone exceeds the delay target:
\begin{align}
    k_d\,\coloneqq& \underset{i}{Max}\,\left\{\,i:i(\ddecod+1)+(i-1)\delta\leq\left\lfloor\frac{d}{T}\right\rfloor\,\right\}\nonumber\\
    =&\left\lfloor\frac{\frac{d}{T}+\delta}{\delta+\ddecod+1}\right\rfloor.\label{eq:secARQ_kd}
\end{align}
\begin{theorem}\label{theorem_arq_dvp}
The DVP of the ARQ scenario for a delay target $d$ is given by:
\begin{multline}
\mathbb{P}(\dtot >d)\leq p^{ k_d } \\+  
\dfrac{f\,\left(\frac{p}{1-f}\right)^{  \left\lfloor\nicefrac{d}{T}\right\rfloor +\delta}
\left(1-p^{ k_d }\left(\frac{p}{1-f}\right)^{- k_d (1+\delta+\ddecod)}
\right)}
{f+\left(\frac{p}{1-f}\right)^{\delta+\ddecod}-1}.\numberthis\label{dvpbound}
\end{multline}
\end{theorem}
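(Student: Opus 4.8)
The plan is to reduce the two‑dimensional event $\{\dtot>d\}$ to a one‑dimensional wait‑delay event by conditioning on the number $m$ of transmission attempts made by the tagged packet. This number simultaneously fixes the service delay through \eqref{eq:SecModel_Dserv}, and in the i.i.d.\ ARQ failure model it is independent of the queue the packet finds on arrival, because the tagged packet's own decoding outcomes are drawn afresh, independently of everything that shaped $\dwait$. Writing $s_m:=m(\ddecod+1)+(m-1)\delta$ for the service delay given $m$ attempts and $w:=\lfloor d/T\rfloor$, and noting that $\{\dtot>d\}=\{\dtot>w\}$ since $\dtot$ is integer‑valued, the law of total probability together with the service‑delay law \eqref{eq:secARQ_servDelay} gives
\[
\mathbb{P}(\dtot>d)=\sum_{m\ge1}\mathbb{P}(\dserv=s_m)\,\mathbb{P}(\dwait>w-s_m)=\sum_{m\ge1}p^{m-1}(1-p)\,\mathbb{P}(\dwait>w-s_m).
\]

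Next I would split this sum at $m=k_d$. By the defining property of $k_d$ in \eqref{eq:secARQ_kd}, $s_m\le w$ precisely when $m\le k_d$, while $s_m\ge w+1$ for $m>k_d$. Since $\dwait\ge0$, each term with $m>k_d$ contributes $\mathbb{P}(\dwait>w-s_m)=1$, so their total is the geometric tail $\sum_{m>k_d}p^{m-1}(1-p)=p^{k_d}$, which is the leading term of \eqref{dvpbound}. For $1\le m\le k_d$ we have $w-s_m\ge0$, so Lemma~\ref{lemma_arq_wait} applies and bounds $\mathbb{P}(\dwait>w-s_m)\le\frac{f}{1-p}\big(\tfrac{p}{1-f}\big)^{w-s_m+1}$; this termwise replacement is legitimate even at those $m$ for which the right‑hand side happens to exceed $1$.

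It then remains to evaluate the finite sum $f\sum_{m=1}^{k_d}p^{m-1}\big(\tfrac{p}{1-f}\big)^{w-s_m+1}$ in closed form. Substituting $s_m=m(\ddecod+1+\delta)-\delta$ turns each summand into $f\,p^{-1}\big(\tfrac{p}{1-f}\big)^{w+\delta+1}\,c^m$ with $c:=p\big(\tfrac{p}{1-f}\big)^{-(\ddecod+1+\delta)}$, so it is a plain geometric series $\sum_{m=1}^{k_d}c^m=c\,\frac{1-c^{k_d}}{1-c}$. Absorbing the prefactor $p^{-1}\big(\tfrac{p}{1-f}\big)^{w+\delta+1}c$, which collapses to $\big(\tfrac{p}{1-f}\big)^{w-\ddecod}$, and clearing the negative exponents by multiplying numerator and denominator by $\big(\tfrac{p}{1-f}\big)^{\ddecod+\delta}$, the numerator becomes $\big(\tfrac{p}{1-f}\big)^{w+\delta}$, the denominator becomes, after a sign flip, $1-f-\big(\tfrac{p}{1-f}\big)^{\ddecod+\delta}$, and $c^{k_d}$ becomes $p^{k_d}\big(\tfrac{p}{1-f}\big)^{-k_d(1+\delta+\ddecod)}$; together with the sign this reproduces exactly the subtracted term of \eqref{dvpbound}. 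The one real obstacle is the bookkeeping in this last step — keeping the four exponent contributions $w$, $\delta$, $\ddecod$ and the running index $m$ aligned through the change of variable and the geometric closed form — rather than anything conceptual; in particular the floor in $w=\lfloor d/T\rfloor$ needs no special treatment, since it enters only through $k_d$ and through the already integer‑valued exponent $w$.
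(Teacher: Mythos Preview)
Your proof is correct and follows essentially the same route as the paper: condition on the number of attempts, separate the contribution $p^{k_d}$ from the attempts $m>k_d$ whose service delay alone already exceeds the target, then apply the wait--delay bound of Lemma~\ref{lemma_arq_wait} to the remaining $k_d$ terms and collapse the resulting geometric sum. Your explicit remark on the independence of the tagged packet's outcomes from $\dwait$, and the observation that the termwise bound may exceed~$1$ without harm, are nice clarifications absent from the paper's write--up, but the decomposition and algebra are the same.
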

\begin{proof}
We have,
\begin{align*}
\mathbb{P}(\dtot >d)&\leq \sum_{i}\mathbb{P}(\dserv =i)\mathbb{P}(\dwait >d-i),\\
% &= p^{k_d} + \sum_{i=1}^{{ k_d }} (1 - p) p^{i - 1} \left( \sum_{k = 1 + \left\lfloor\nicefrac{d}{T}\right\rfloor - (i + (i - 1) {\delta})}^{\infty}f\,\left( \frac{p}{1 - f} \right)^{k}
% \frac{\left(1-f-p\right)}{(1 - f) \, (1 - p)}\right)
&= p^{k_d} + \sum_{i=1}^{{ k_d }} (1 - p) p^{i - 1}
\\&\hspace{2.5em}\left( \frac{f}{1-p}\left(\frac{p}{1-f}\right)^{1 + \left\lfloor\nicefrac{d}{T}\right\rfloor - \big(i(\ddecod+1) + (i - 1) {\delta}\big)}\right).
\end{align*}
The first term corresponds to the probability that the service delay alone exceeds the target. The second term accounts for a successful transmission at the $i^{\mathrm{th}}$ attempt with a service delay of $i(\ddecod+1) + (i-1)\delta$, along with all possible wait delays from \eqref{eq:secARQ_waitdelay} that in combination violate the delay target.
\begin{multline}
    \Rightarrow\mathbb{P}(\dtot >d) \leq p^{k_d} + 
f\left(\frac{p}{1-f}\right)^{1 + \left\lfloor\nicefrac{d}{T}\right\rfloor + \delta}\,
\\\sum_{i=1}^{{ k_d }}p^{i - 1} \left(\frac{p}{1-f}\right)^{-i(1+\delta+\ddecod)}
\end{multline}
\begin{align*}
&= p^{k_d} + 
f\left(\frac{p}{1-f}\right)^{1 + \left\lfloor\nicefrac{d}{T}\right\rfloor + \delta}\,
\\&\hspace{10.5em}\sum_{i=0}^{{ k_d -1 }} p^{i} \left(\frac{p}{1-f}\right)^{-(i+1)(1+\delta+\ddecod)}\\
&= p^{k_d} + 
f\left(\frac{p}{1-f}\right)^{1 + \left\lfloor\nicefrac{d}{T}\right\rfloor + \delta}\,
\left(\frac{p}{1-f}\right)^{-(1+\delta+\ddecod)}\,
\\&\hspace{10.5em}\sum_{i=0}^{{ k_d -1 }} p^{i} \left(\frac{p}{1-f}\right)^{-i(1+\delta+\ddecod)}\\
&= p^{k_d} + 
f\left(\frac{p}{1-f}\right)^{ \left\lfloor\nicefrac{d}{T}\right\rfloor-\ddecod}\,
\sum_{i=0}^{{ k_d -1 }} p^{i} \left(\frac{p}{1-f}\right)^{-i(1+\delta+\ddecod)}\\
% &= p^{k_d} + 
% f\left(\frac{p}{1-f}\right)^{ \left\lfloor\nicefrac{d}{T}\right\rfloor}\,
% \frac{1-\left(p\left(\frac{p}{1-f}\right)^{-(1+\delta)}\right)^{k_d}}{1-\left(p\left(\frac{p}{1-f}\right)^{-(1+\delta)}\right)}\\
&= p^{k_d} + 
f\left(\frac{p}{1-f}\right)^{ \left\lfloor\nicefrac{d}{T}\right\rfloor-\ddecod}\,
\frac{1-p^{k_d}\left(\frac{p}{1-f}\right)^{-k_d(1+\delta+\ddecod)}}{1-p\left(\frac{p}{1-f}\right)^{-(1+\delta+\ddecod)}}
% &= p^{ k_d } +  
% \dfrac{f\,\left(\frac{p}{1-f}\right)^{  \left\lfloor\nicefrac{d}{T}\right\rfloor +\delta}
% \left(1-p^{ k_d }\left(\frac{p}{1-f}\right)^{- k_d (1+\delta)}
% \right)}
% {\left(\frac{p}{1-f}\right)^{\delta}\left(1-p\left(\frac{p}{1-f}\right)^{-(1+\delta)}\right)}\\
\end{align*}
\begin{multline}
    \Rightarrow\mathbb{P}(\dtot >d)\leq p^{ k_d } \\-  
\dfrac{f\,\left(\frac{p}{1-f}\right)^{  \left\lfloor\nicefrac{d}{T}\right\rfloor +\delta}
\left(1-p^{ k_d }\left(\frac{p}{1-f}\right)^{- k_d (1+\delta+\ddecod)}
\right)}
{1-f-\left(\frac{p}{1-f}\right)^{\delta+\ddecod}}
\end{multline}
\end{proof}
% \todo[inline]{Outro?}

\section{HARQ: Incremental Redundancy}\label{sec:harq}
In this section, we consider the HARQ scenario with incremental redundancy (IR). 
As discussed earlier, we assume that the coded packet length for all transmissions remains constant, thereby attaining the maximum increment in redundancy with reach retransmission. 
The PER is represented by the vector $\Vec{p}=\begin{bmatrix}p_1 & p_2 & \dots & p_M\end{bmatrix}$. We assume a maximum of $M$ transmissions and a maximum of $Q_{\mathrm{max}}$ parallel HARQ processes. Typically, $M=4$, and $Q_{\mathrm{max}}$ is 8 or 16 in real HARQ implementations.
\begin{table}[b]
\centering
\renewcommand{\arraystretch}{1.25}
\renewcommand{\arraystretch}{1.25}
\begin{tabular}{|l|l|l|l|l|}
\hline
\multicolumn{1}{|c|}{\multirow{2}{*}{State}} & \multicolumn{1}{c|}{\multirow{2}{*}{Next state}} & \multicolumn{1}{c|}{Range} & \multicolumn{1}{c|}{Range} & \multicolumn{1}{c|}{Proba-} \\
\multicolumn{1}{|c|}{}                       & \multicolumn{1}{c|}{}                            & \multicolumn{1}{c|}{$(q)$}   & \multicolumn{1}{c|}{$(m)$}   & \multicolumn{1}{c|}{bility} \\ \hline
$(0,1)$                                      & $(0,1)$                                          &   -                        &   -                        & $1-fp_1$                    \\
$(0,1)$                                      & $(1,2)$                                          &   -                        &   -                        & $fp_1$                      \\
$(q,m)$                                      & $(q,m+1)$                                        & $[1,\qmax)$                & $[1,M)$                    & $f'p_m$                     \\
$(q,m)$                                      & $(q+1,m+1)$                                      & $[1,\qmax)$                & $[1,M)$                    & $fp_m$                      \\
$(q,m)$                                      & $(q,1)$                                          & $[1,\qmax]$                & $[1,M)$                    & $fp_m'$                     \\
$(q,m)$                                      & $(q-1,1)$                                        & $[1,\qmax]$                & $[1,M)$                    & $f'p_m'$                    \\
$(q,M)$                                      & $(q,1)$                                          & $[1,\qmax]$                &    -                       & $f$                         \\
$(q,M)$                                      & $(q-1,1)$                                        & $[1,\qmax]$                &    -                       & $f'$                        \\
$(\qmax ,m)$                                 & $(\qmax ,m+1)$                                   &        -                   & $[1,M)$                    & $p_m$                       \\ \hline
\end{tabular}
\caption{Non-zero probabilities of the transition probability matrix $P$ for a given $q$ and $m$. State number $s=qM+m$ for the state $(q,m)$. $f'=1-f,\;p_m'=1-p_m.$}
\label{table:IR}
\end{table}

To compute the DVP, we proceed similarly to the previous section by combining the wait delay and service delay, which are computed separately. 
We propose an algorithmic approach to compute the wait delay, as this is more suited for HARQ with a relatively small $M$, $\qmax$, and a non-iid PER across the retransmissions.
% Adapting the ARQ queuing model to include maximum retransmissions and non-identical PERs is not straightforward, so we propose an algorithmic approach to address this.
As before in Section~\ref{sec:arq}, we bound the wait delay using the immediate feedback scenario where the retransmissions happen in the immediate next slot.
We now construct the Markov chain transition probability matrix of this scenario.

\subsection{Queueing Model}
We define $(\qmax+1)M$ states, denoted by the tuple $(q,m)$, where $0 \leq q \leq Q_{\mathrm{max}}$ and $1 \leq m \leq M$, measured at the slot boundary. The states represent the current queue length $q$ (observed by a newly arriving packet) and the transmission number $m$ of the packet that will be transmitted in the next slot. 
For example, the state $(3,2)$ indicates that the queue length of $3$ and the packet to be transmitted has already failed once. 
% In this section, we directly for the adapted model discussed Section~\ref{sec:arq}, where we look at the Markov chain whose steady state distribution dominates the actual queue distribution.

The non-zero transition probabilities for all states are given in Table~\ref{table:IR}. For ease in constructing and using the transition probability matrix $P_{(q+1)M\times (q+1)M}$, we number the states as $s=1,2,\dots,(qM+m),\dots,(q+1)M$. The states $(0,m), m \geq 2$ are never reached and are included for uniformity and simplicity. These states are defined with a self-loop probability of 1 and have a steady-state probability of 0.

% \begin{table*}[t]
% \centering
% \begin{tabular}{|l|l|l|l|l|}
% \hline
% State                  & Next State               & Range ($q$)                & Range ($m$)  & Probability \\ \hline
% $(0,1)$                & $(0,1)$                  &                            &              & $1-fp_1$    \\ \hline
% $(0,1)$                & $(1,2)$                  &                            &              & $fp_1$      \\ \hline
% $(q,m)$                & $(q,m+1)$                & $1\leq  q< \qmax $   & $1\leq m<M$  & $f'p_m$     \\ \hline
% $(q,m)$                & $(q+1,m+1)$              & $1\leq  q<\qmax $    & $1\leq m<M$  & $fp_m$      \\ \hline
% $(q,m)$                & $(q,1)$                  & $1\leq q\leq \qmax $ & $1\leq m< M$ & $fp_m'$     \\ \hline
% $(q,m)$                & $(q-1,1)$                & $1\leq q\leq \qmax $ & $1\leq m< M$ & $f'p_m'$    \\ \hline
% $(q,M)$                & $(q,1)$                  & $1\leq q\leq \qmax $ &              & $f$         \\ \hline
% $(q,M)$                & $(q-1,1)$                & $1\leq q\leq \qmax $ &              & $f'$        \\ \hline
% $(\qmax ,m)$ & $(\qmax ,m+1)$ &                            & $1\leq m< M$ & $p_m$       \\ \hline
% \end{tabular}
% \caption{Non-zero probabilities of the transition probability matrix $P$. State number $s=qM+m$ for the state $(q,m)$. $f'=1-f,\;p_m'=1-p_m.$}
% \label{table:IR}
% \end{table*}

The probabilities can be explained as follows: $f$ represents arrival, and $p_m$ represents the PER at the $m^{\mathrm{th}}$ attempt. For state $(0,1)$, an arrival and transmission failure lead to a transition to state $(1,2)$, while other possibilities result in a loop. In states with $m=M$, the PER becomes irrelevant because the packet is either successfully transmitted or discarded. Similarly, a packet is dropped when an arrival and transmission failure occurs at state $(Q_{\mathrm{max}},m)$ due to a queue overflow. For other states, the transitions follow the typical pattern: failures increase $m$, successes reset $m$ to 0, and arrivals/departures adjust the queue size based on the transmission outcome.

Once $P$ is constructed, the steady-state probabilities, denoted by $\Tilde{\pi}$, can be computed by finding the eigenvector of $P^T$ corresponding to the unit eigenvalue. This can be done using standard algorithms or by iterating $P$ until $P^i \approx P^{i+1}$, with the rows converging to the steady-state probabilities.

The steady-state probabilities $\Tilde{\pi}$ are for the modified Markov chain with $(Q_{\mathrm{max}}+1)M$ states. To obtain the steady-state probabilities ${\pi}$ for each queue length $q=1,2,\dots, Q_{\mathrm{max}}$, we sum the probabilities of all states with the same queue length but different $m$ values:
\begin{equation}
    \pi_q = \sum_{s=qM}^{(q+1)M}\Tilde{\pi}_s.
\end{equation}
We assume that $\qmax$ is chosen such that packet drops due to queue overflow are negligible, typical in a high-reliability setting. Otherwise, one could repeat with a larger $\qmax$. That being said, we do consider the drops emerging from packets reaching the retransmission limit of HARQ ($m=M$), which cannot be neglected.

\subsection{Wait Delay}
To compute the wait delay, we start by finding $f_W(k|q)$, the conditional wait probability given queue length $q$. We propose ALGORITHM~\ref{Alg:getWaitProbability} to compute this for a given $k,q,\Vec{p}$ and $M$ using combinatorics.
% recursively by summing $q$ i.i.d. random variables, each representing transmission attempts with probabilities $p_1, p_2, \dots, p_M$. 
% The wait delay is 0 if and only if the queue length is 0, so $f_W(0)~=~\pi_0$. 
The unconditional wait delay pmf is obtained by marginalizing the queue length probabilities:
\begin{equation}
    f_{\dwait}(k)=\mathbb{P}(\dwait=k) \leq \sum_{q=0}^{\infty}\pi_qf_W(k|q).\label{eq:secIR_waitDelay}
\end{equation}
\begin{algorithm}[t]
\caption{Recursive function \textit{getWaitProbability} to compute the conditional probability of wait delay of $k$ slots given a queuelength of $q$ packets. The global constant $M_0 = M$ in the first call of the recursion.}
\begin{algorithmic}[th]
% \State $M_0 \gets M$\Comment{$M_0$ is a global constant.}
\Function{getWaitProbability}{$k, q, \Vec{p}, M, M_0$}
    \If{$k == q$}
        \State\Return $(1 - \Vec{p}_1)^q$
        \Comment{$k=q\Rightarrow$ all success.}
    \ElsIf{$k < q$ \textbf{or} $k > M \cdot q$}
        \State\Return $0$
        \Comment{Out of range, $prob = 0$.}
    \EndIf
    \State $prob \gets 0$
    \State $N \gets \min(\text{floor}((k - q) / (M - 1)), q)$\
        \\\Comment{Max \#packets with max attempts $=M$.}
    \For{$n = 0$ \textbf{to} $N$}
        \State $numSeqs \gets \dbinom{q}{n}$
        \State $seqProbFail \gets \prod_{i=1}^{M-1}\Vec{p}_{i}$
        
        \If{$M == M_0$}
            \State $seqProbSucc \gets 1$
            \\\Comment{Handle discard case when $M = M_0$.}
        \Else
             \State $seqProbSucc \gets (1 - \Vec{p}_M)$
        \EndIf
        
        \State $seqProb\gets(seqProbFail \cdot seqProbSucc)^n$
        \State $subSeqProb \gets$ \Call{getWaitProbability}{}
        \State \qquad\qquad\qquad\qquad($k-Mn, q - n, \Vec{p}, M - 1, M_0$)
        \\\Comment{Recursion.}
        \State $prob \gets prob + numSeqs\!\cdot\!seqProb\!\cdot\!subSeqProb$
    \EndFor
    \State \Return $prob$
\EndFunction
\end{algorithmic}
\label{Alg:getWaitProbability}
\end{algorithm}

\subsection{Delay Violation Probability}
We now compute the distributions of the service delay, similar to Section~\ref{NoIR_dvp}. 
The service delay is determined by the PER vector and $k_d$, the maximum number of transmissions allowed before exceeding the delay target. 
Unlike \eqref{eq:secModel_Kd}, where we assumed infinite retransmissions, here we limit $k_d$ by $M$:
\begin{align}
k_{d}&=\min\left(M,\left\lfloor\frac{\nicefrac{d}{T}+\delta}{\delta+\ddecod+1}\right\rfloor\right),\label{eq:secIR_kd}\\
\mathbb{P}(\dserv>d)&=\prod_{i=1}^{k_{d}}p_i.\label{eq:secIR_servDelay}
\end{align}
% To compute the queue length or wait time probabilities in practice, we terminate the algorithm once the probability reaches zero (or a sufficiently small value), as these probabilities decrease monotonically. 
Let $k_{d-kT}$ denote the $k_d$ for the delay target $d-kT$. 
\begin{align*}
    k_{d-kT}&=\min\left(M,\left\lfloor\frac{\nicefrac{(d-kT)}{T}+\delta}{\delta+{\color{red}\ddecod}+1}\right\rfloor\right),\nonumber\\
    &=\min\left(M,\left\lfloor\frac{\nicefrac{d}{T}-k+\delta}{\delta+\ddecod+1}\right\rfloor\right).
\end{align*}

Finally, the total DVP is computed as before in Section~\ref{NoIR_dvp}, using the wait delay and service delay violation probabilities:
\begin{align}
    \mathbb{P}(\dtot>d) &=\sum_k \mathbb{P}(\dwait=k)\mathbb{P}(\dserv>d-k)\nonumber\\
    % \sum_kf_{D_w}(k)\prod_{i=1}^{ \min\left(M,1 + \left\lfloor\frac{\nicefrac{d}{T}-1}{\delta+1}\right\rfloor\right)}p_i
    &\leq\sum_kf_{\dwait}(k)\prod_{i=1}^{k_{d-kT}}p_i.\label{eq:secIR_totDelay}
\end{align}
% \todo[inline]{Outro?}
% \input{06_Analysis_DetArrivals}

\section{Numerical Evaluation}\label{sec:simulations}
We begin this section on numerical evaluation by detailing the parameter configuration, including default settings, MCS selection processes, and PER computation methods. 
We then compare the proposed ARQ and HARQ DVP evaluation schemes with the state-of-the-art IF approximation, showing the importance of not ignoring the decoding and feedback delay. 
Following this, we study key DVP trends across varying system parameters by examining the impact of RTT, resource allocation, and arrival rate on the evaluated DVP. Throughout this section, we consider a persistent ARQ with unlimited retransmissions and queue size, i.e., $M = \qmax = \infty$ and a typical HARQ configuration of $M = 4$ and $\qmax = 16$. 

\subsubsection*{Parameter Configuration}
We proposed DVP evaluation for ARQ and HARQ across various system parameters, leading to numerous permutations of parameter settings and illustrations. However, for clarity and conciseness, we limit our evaluation to key configurations.
Since incorporating decoding and feedback delays and supporting parallel ARQ/HARQ processes is the key novelty of this work, we choose RTT and the delay threshold $d$, which directly influences the DVP. 
We consider allocated $\nrb$ and packet length to highlight resource allocation implications and the arrival frequency $f$ to study system throughput. 
Default parameter values and ranges are listed in TABLE~\ref{table:params}. We set $\hvar=1$, $\gamma = 10$ dB, and $V=1$ as $\vert V-1\vert<0.0414,\,\forall\,\text{SNR}>4.1$ dB in an AWGN channel~\cite{polyanskiy2010FBL}. In each figure, a subset of parameters is varied, and the default values are used for the rest.

While some parameters like $\nrb$ are configurable, others are application-specific or come from the device capabilities. For example, \cite{3gpp.22.261} outlines KPI requirements for various applications. Similarly, RTT depends on factors such as decoding capabilities, feedback scheduling delays, and priority levels assigned by the gNB.
To ensure broad applicability, we use parameters within a typical range and present results independent of specific applications or scenarios.
\begin{table}[t]
\centering
\renewcommand{\arraystretch}{1.25}
\begin{tabular}{|l|l|l|}
\hline
Parameter             & Default value  & Range                                                                                                                            \\ \hline
$n$                   & 100$\times$8 b & $\{30, 50, 100, 200\}\times8$ b                                                                                                  \\
$\eta_{\mathrm{min}}$ & 0.2344         & -                                                                                                                                \\
$\eta_{\mathrm{max}}$ & 5.5547         & -                                                                                                                                \\
$\nrb$                & 10             & $\left\{\left\lceil\frac{n}{180\eta_{\mathrm{max}}}\right\rceil,\left\lceil\frac{n}{180\eta_{\mathrm{min}}}\right\rceil\right\}$ \\
$\snr$                & 10 dB          & -                                                                                                              \\
$\ddecod$             & 1 slot         & -                                                                                                                                \\
$\delta$              & 2 slots        & -                                                                                                                                \\
RTT                   & 4 slots        & $[1,7]$ slots                                                                                                                    \\
$d$                   & 8.5 ms         & $[2, 20]$ ms                                                                                                                     \\
$f$                   & $\frac{1}{3}$  & -                                                                                                                                \\
$\hvar$               & 1              & -                                                                                                                                \\
$V$                   & 1              & -                                                                                                                                \\ \hline
\end{tabular}
\caption{Default values and range of important parameters.}
\label{table:params}
\end{table}
\begin{figure}[b]
\centering
\includegraphics[width=0.98\linewidth]{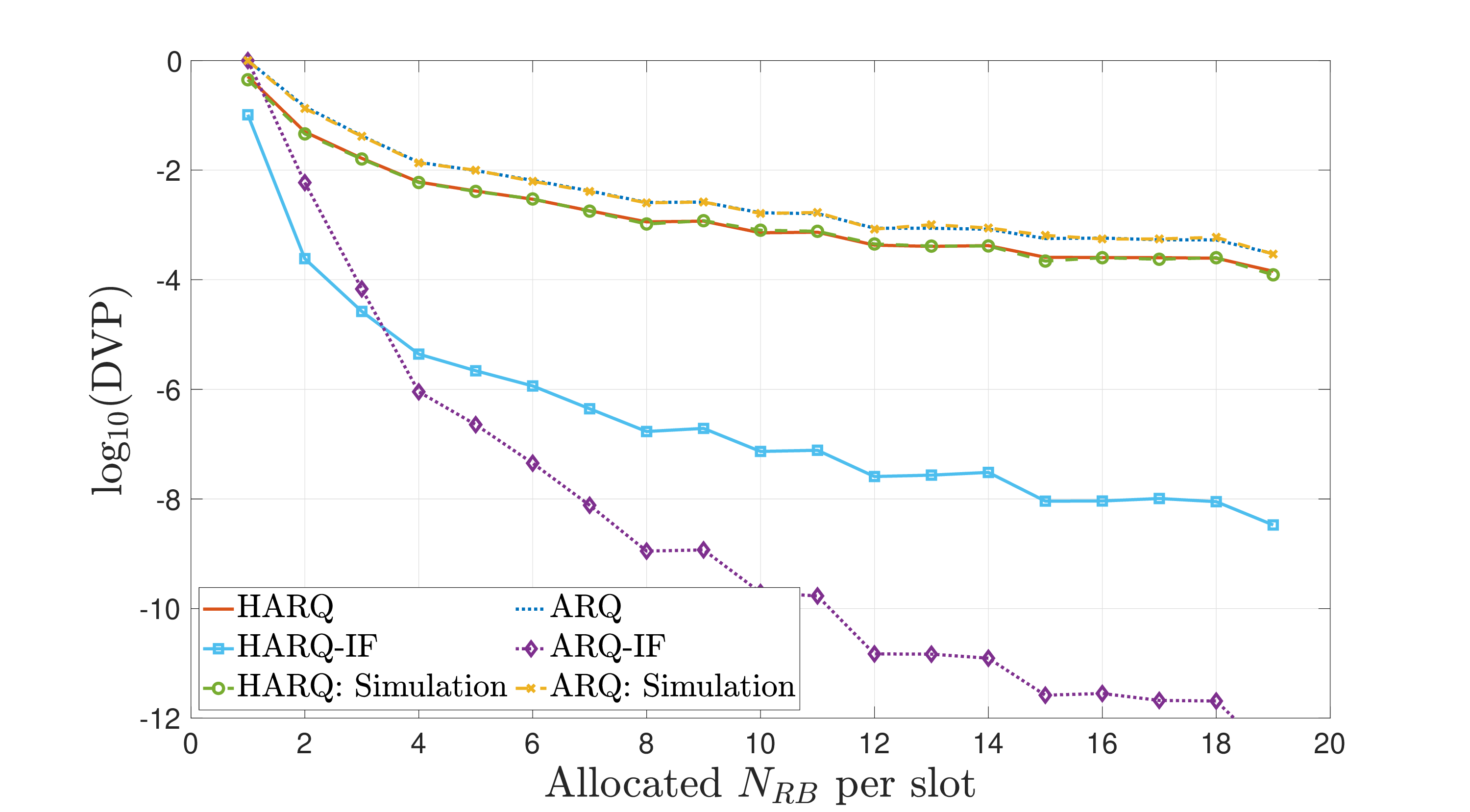}
\caption{Performance comparison of the proposed DVP evaluation schemes with the immediate feedback (IF) schemes for default configuration. The simulated DVP is also shown.}
\label{fig.Comparison}
\end{figure}
\begin{figure*}[t]
\centering
\begin{subfigure}[t]{0.49\linewidth}
\centering
\includegraphics[width=\textwidth]{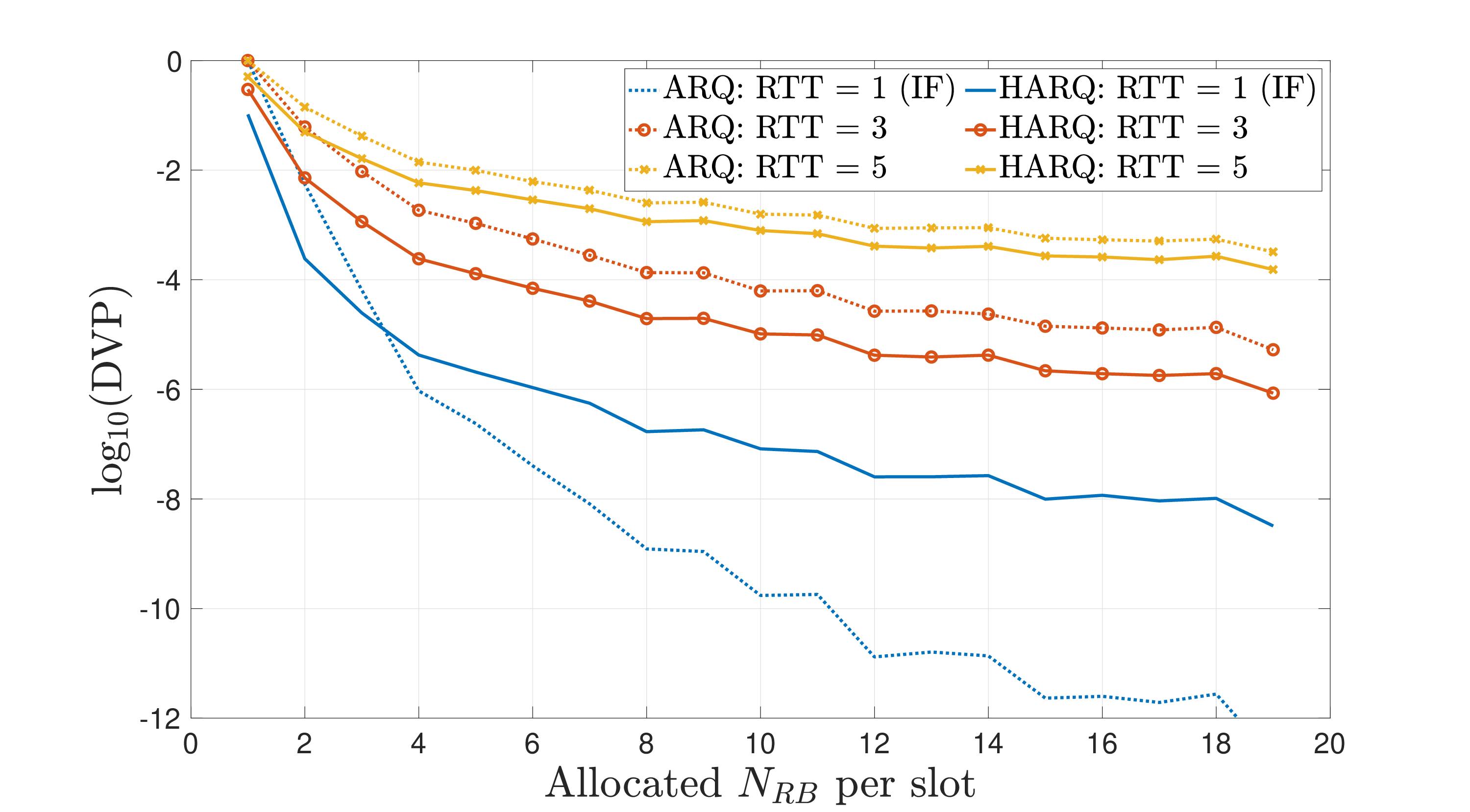}
\caption{DVP vs. $\nrb$ for different RTT. $d=8.5$ ms.}
\label{fig:Sim_dvp-nrb-rtt}
\end{subfigure}
\begin{subfigure}[t]{0.49\linewidth}
\centering
\includegraphics[width=\textwidth]{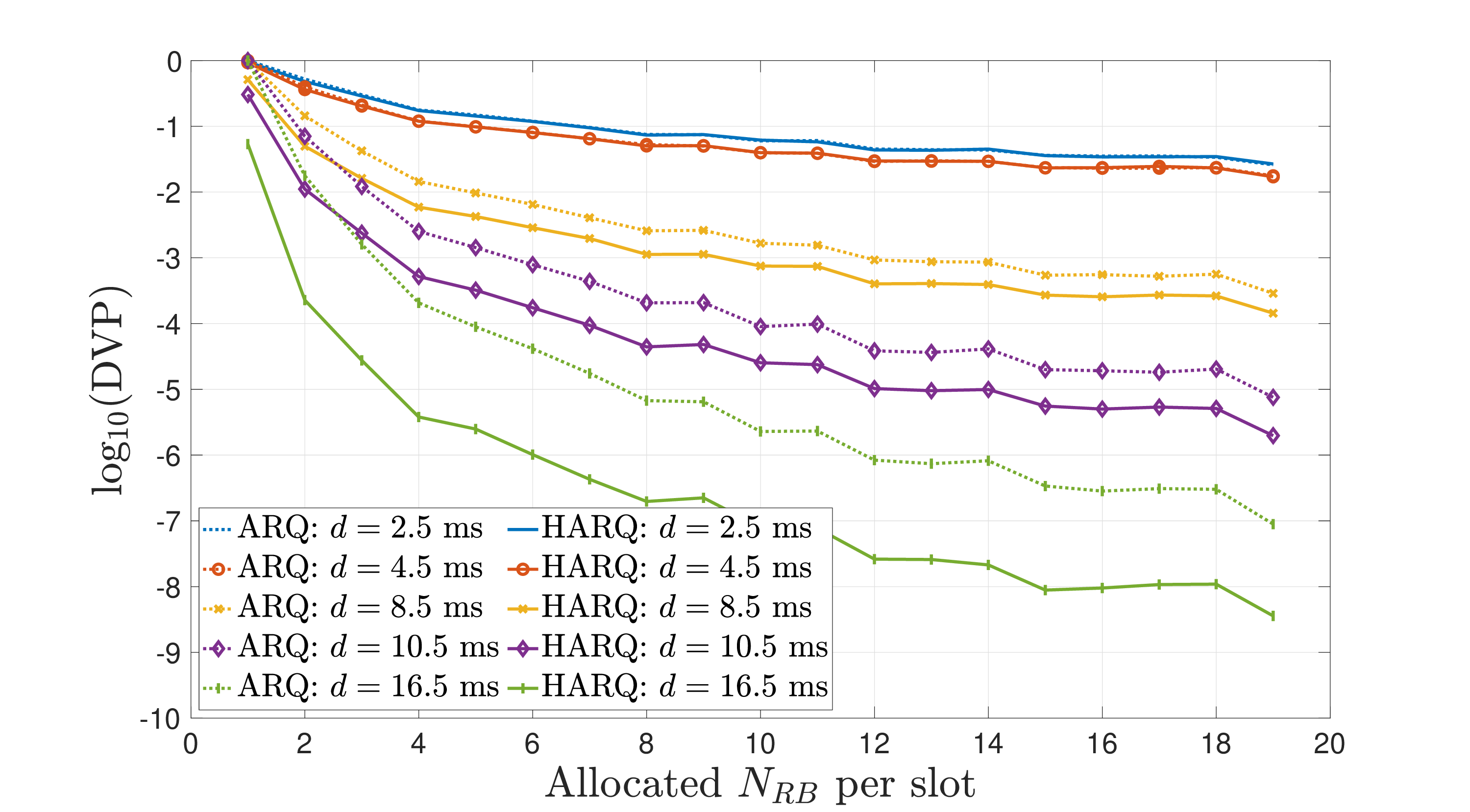}
\caption{DVP vs. $\nrb$ for different target delay $d$. RTT $=4$ slots.}
\label{fig:Sim_dvp-nrb-d}
\end{subfigure}
\caption{DVP vs. allocated $\nrb$ per slot for different delay parameters, namely RTT and $d$.}
\label{fig:Sim_dvp-nrb-nAndRtt}
\end{figure*}

Now, we move on to the MCS selection mechanism. 
The MCS selection follows 3GPP standards~\cite{3gpp.38.214}, where we choose an MCS index from 0 to 28, and obtain the modulation order, coding rate and spectral efficiency $\eta$ corresponding to it.
This range of MCS corresponds to a lower and upper limit of possible $\nrb$ for a given uncoded packet length $n$.
To minimize DVP for a given $\nrb$, the smallest MCS with $180\nrb\eta\geq n$ is chosen, that is, an MCS capable of supporting $n$ bits on $\nrb$ resources. 

We compute PER for ARQ and HARQ using Monte Carlo methods as described in Section~\ref{sec:model}. 
For DVP calculation, we use \eqref{dvpbound} for ARQ and ALGORITHM~\ref{Alg:getWaitProbability} with \eqref{eq:secIR_waitDelay} and \eqref{eq:secIR_totDelay} for HARQ. Note that we observe the DVP changing in steps at various points in all figures. This results from the finite and discrete nature of MCS selection and RB allocation in the 5G standard.

\subsubsection*{Performance Comparison}
To evaluate performance, we compare the proposed methods with state-of-the-art single-server models, which are accurate only under the assumption of zero decoding and feedback delays, thereby eliminating the need for multiple processes. In this section, these models are referred to as the immediate feedback models (IF), where feedback is assumed to be available immediately after the transmission slot, effectively setting RTT to one slot.
The IF serves as the benchmark because the key novelty of this work lies in addressing the unrealistic assumptions of zero RTT and single-process ARQ/HARQ implementations.
We use two IF benchmarks, ARQ-IF and HARQ-IF, derived by setting $\ddecod = \delta = 0$ in the ARQ and HARQ schemes, respectively.

In Fig.~\ref{fig.Comparison}, we present the DVP for ARQ, HARQ, and IF under the default parameter settings, with the HARQ results validated using an event-based numerical simulation in MATLAB. The simulation confirms that, under the model assumptions, HARQ accurately computes the DVP across different values of $\nrb$, reflecting a realistic 5G scenario.
Next, we observe that ARQ consistently produces a worse DVP than HARQ, as expected, due to the absence of incremental redundancy, a key feature of modern 5G systems. This trend holds throughout the section, except for extremes such as those observed here for IF, where it only holds for $\nrb<3$. For  $\nrb\geq4$, a different trend results from the 1-slot RTT that allows up to 8 attempts within the default target delay of 8.5 ms. While ARQ can fully utilize these opportunities, HARQ is constrained to $M=4$ retransmissions.
Finally, IF shows a DVP that is 4 to 6 orders of magnitude smaller than that of HARQ, which is the price one has to pay for the added delay. This shows that IF comes with a large sacrifice in terms of DVP accuracy if used to approximate the DVP of a realistic 5G HARQ.
This comparison also highlights that ARQ is a much better closed-form approximation for HARQ than the IF.

In the remainder of this section, we evaluate the DVP trends across various parameters. As the accuracy improvement with respect to IF remains consistent across configurations, we focus only on the proposed ARQ and HARQ schemes to maintain clarity.
\begin{figure}[t]
\centering
\includegraphics[width=0.98\linewidth]{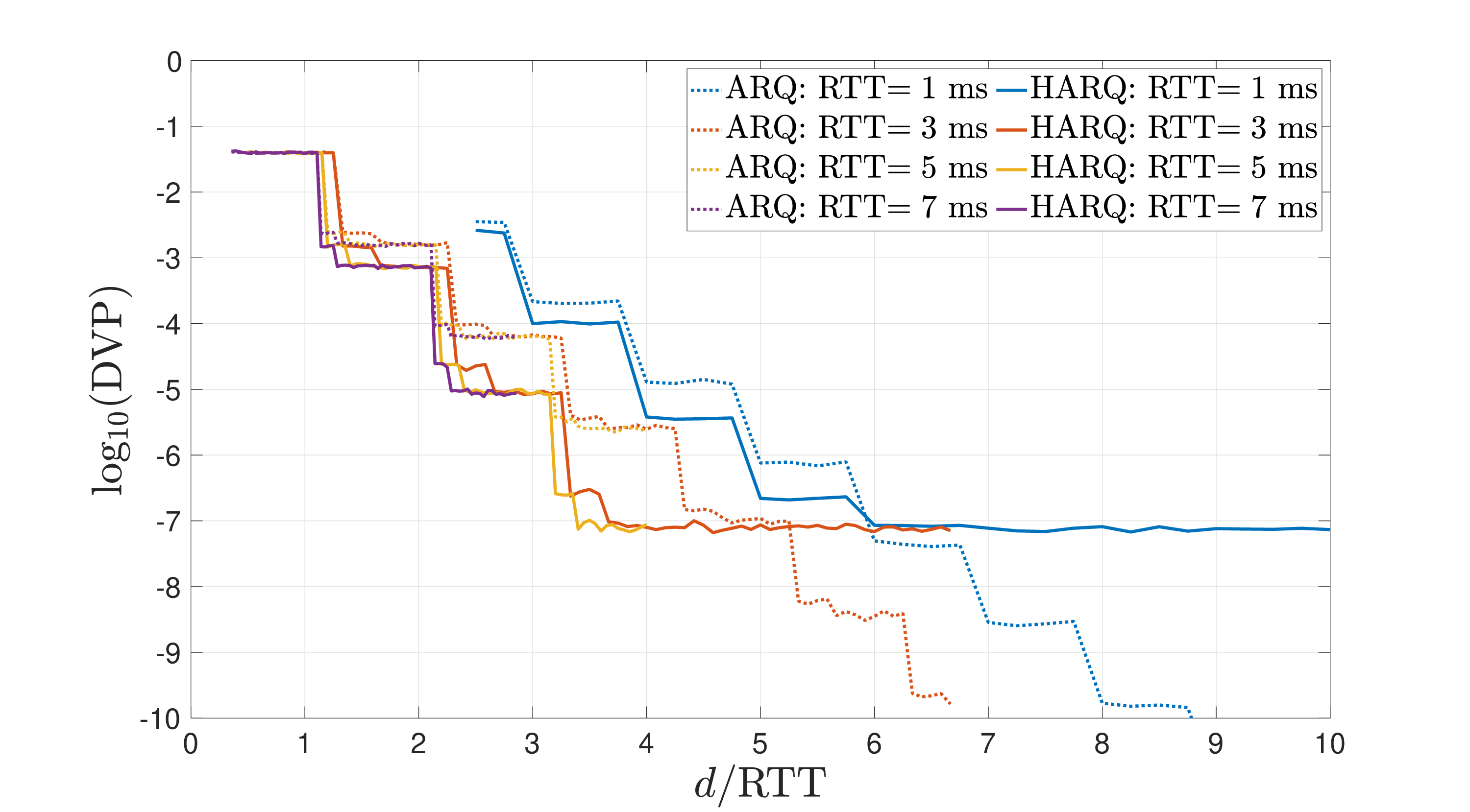}
\caption{DVP vs. $\nicefrac{d}{1+\ddecod+\delta}$, the delay target to RTT ratio.}
% \caption{DVP vs. delay target to RTT ratio.}
\label{fig:Sim_dvp-dByRtt}
\end{figure}

\begin{figure*}[t]
\centering
\begin{subfigure}[t]{0.49\textwidth}
\centering
\includegraphics[width=\textwidth]{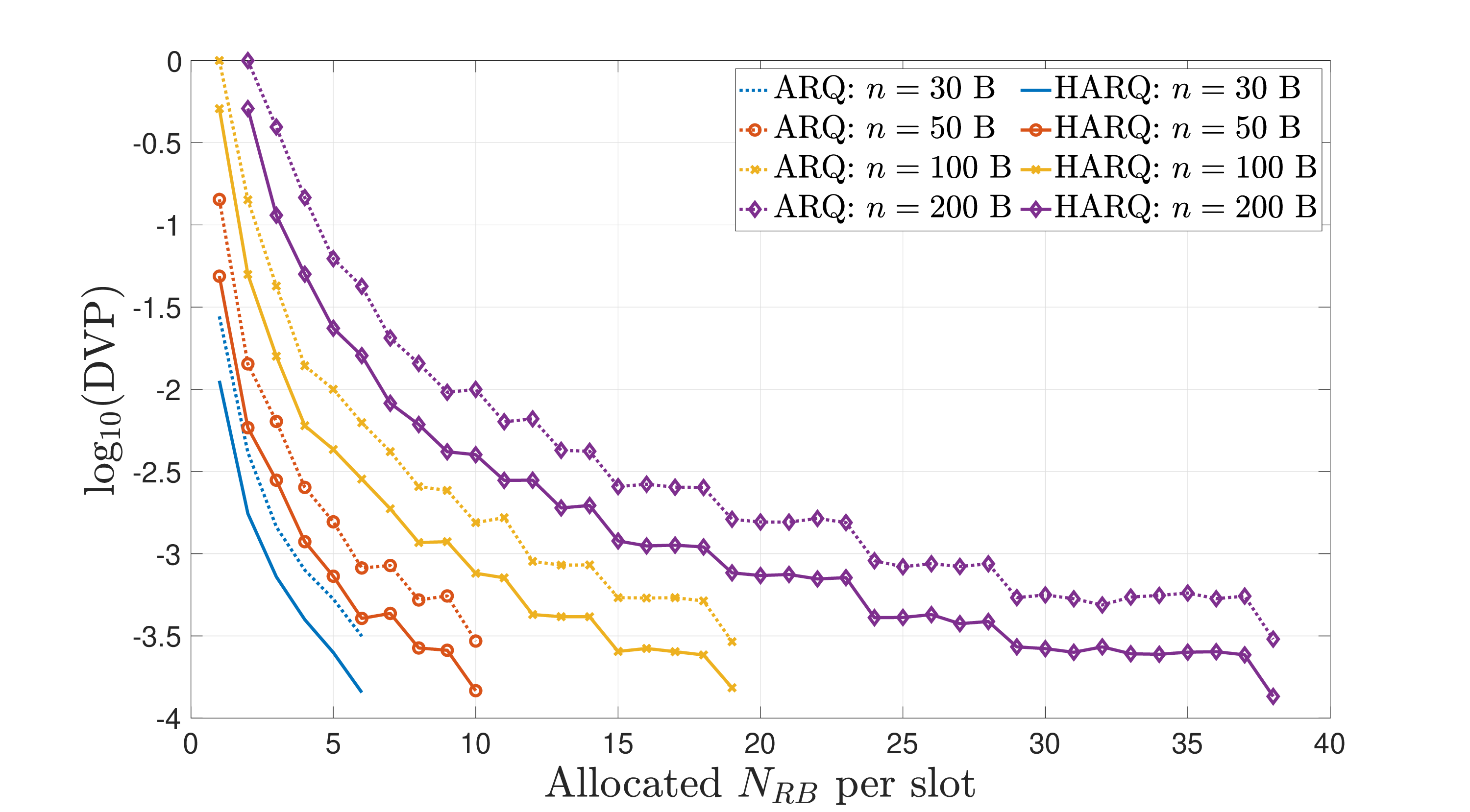}
\caption{DVP vs. allocated $\nrb$ per slot.}
\label{fig:Sim_dvp-nrb-n}
\end{subfigure}
\begin{subfigure}[t]{0.49\textwidth}
\centering
\includegraphics[width=\textwidth]{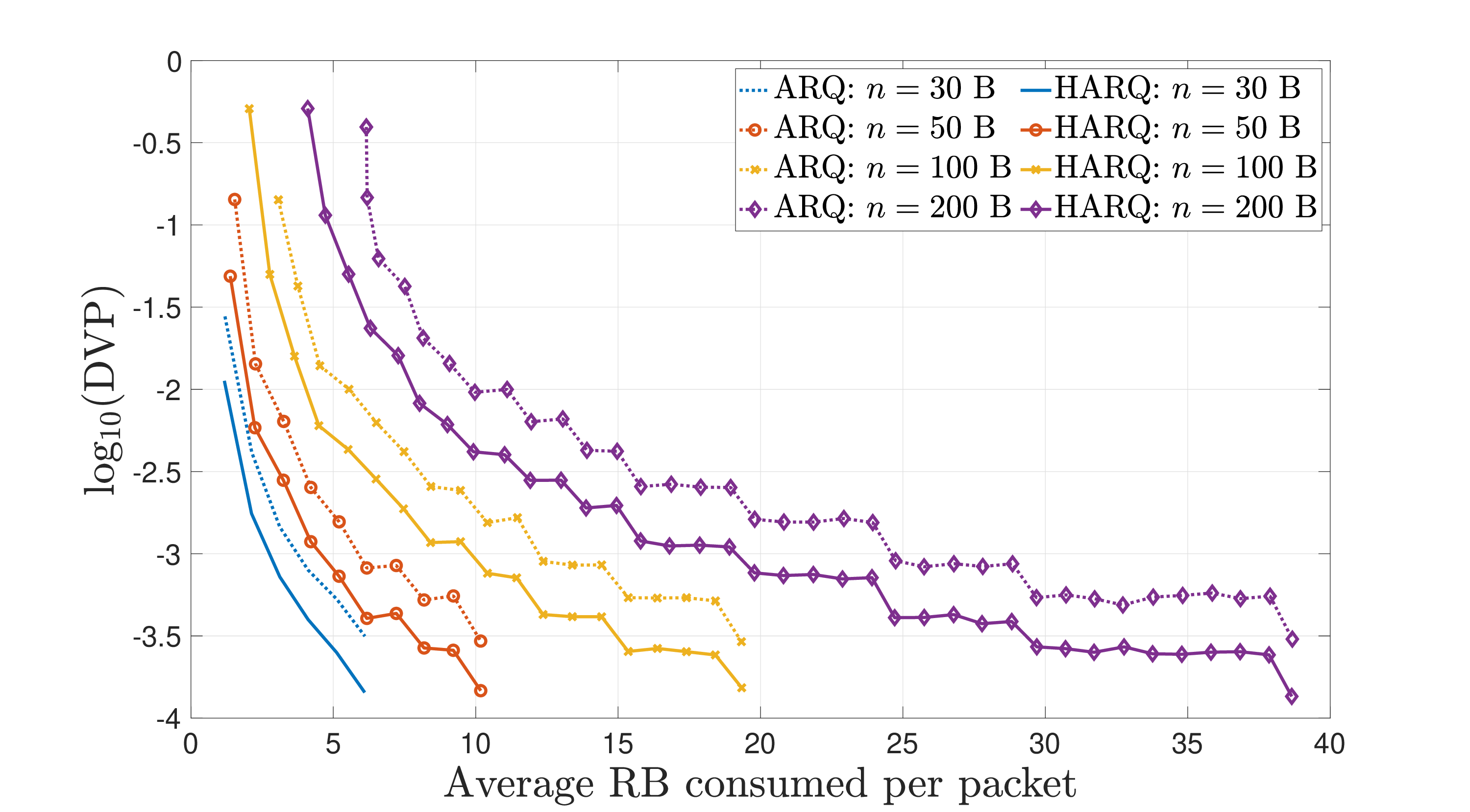}
\caption{DVP vs. average consumed $\nrb$ per packet.}
\label{fig:Sim_dvp-nrbUsed-n}
\end{subfigure}
\caption{DVP vs. resource blocks $\nrb$ for different uncoded packet lengths $n$.}
\label{fig:Sim_dvp-nrb-nAndUsedn}
\end{figure*}

\subsubsection*{The Effect of Delay Parameters:}
Now, we examine the impact of various delay components in the DVP of ARQ and HARQ.
In Fig.~\ref{fig:Sim_dvp-nrb-rtt} and Fig.~\ref{fig:Sim_dvp-nrb-d}, we show the DVP as a function of allocated $\nrb$ per slot across different RTTs and target delays.
Note that the RTT=1 corresponds to the IF approximation. 
We focus on two key observations. 
First, we observe that RTT substantially influences DVP, and a larger RTT increases the significance of this work over IF assumptions, especially with larger resource allocations.
This is because a large RTT could quickly eat into the delay margin with each retransmission. We also observe that the performance gap between ARQ and HARQ increases when the delay margin is not tight, corresponding to a larger target delay or a smaller RTT.
Second, the improved DVP through a larger $\nrb$ (and a lower MCS resulting from it) becomes more pronounced with a longer delay target. 
For example, the 3-order magnitude DVP improvement between $\nrb=1$ (MCS-28) and $\nrb=19$ (MCS-0) at a default $d=8.5$ ms grows to 7 orders at $d=16.5$ ms, i.e., doubling the delay target provides an additional DVP improvement of up to 4 orders.

Having observed that an increase in target delay or a decrease in RTT similarly affects DVP, it becomes useful to assess DVP as a function of their ratio.
To this end, we show DVP against the target delay-to-RTT ratio $\frac{d\times10^3}{1+\ddecod+\delta}$ in Fig.~\ref{fig:Sim_dvp-dByRtt}, for a fixed $\nrb$ of 10 and a packet length of 100 Bytes, corresponding to MCS-3. 
The RTT is converted to milliseconds for consistency, with 1 ms slots in the chosen numerology of 0 (see Section.\ref{sec:model}).
To obtain this ratio, we fix RTT at various values, as depicted, and vary the target delay from 2.5 to 10 ms.
The plot shows a consistent improvement of approximately one order of magnitude in DVP per unit increase in the ratio across RTT values. 
Another interesting observation comes from the comparison of ARQ and HARQ. While HARQ understandably provides better DVP performance in general, it saturates at around $10^{-7}$ for this default configuration. 
This limitation arises because HARQ, unlike ARQ, is restricted in its retransmission attempts and thus cannot fully leverage larger delay margins, as seen by comparing $k_d$ from \eqref{eq:secARQ_kd} and \eqref{eq:secIR_kd}.
\begin{figure}[t]
\centering
\includegraphics[width=0.98\linewidth]{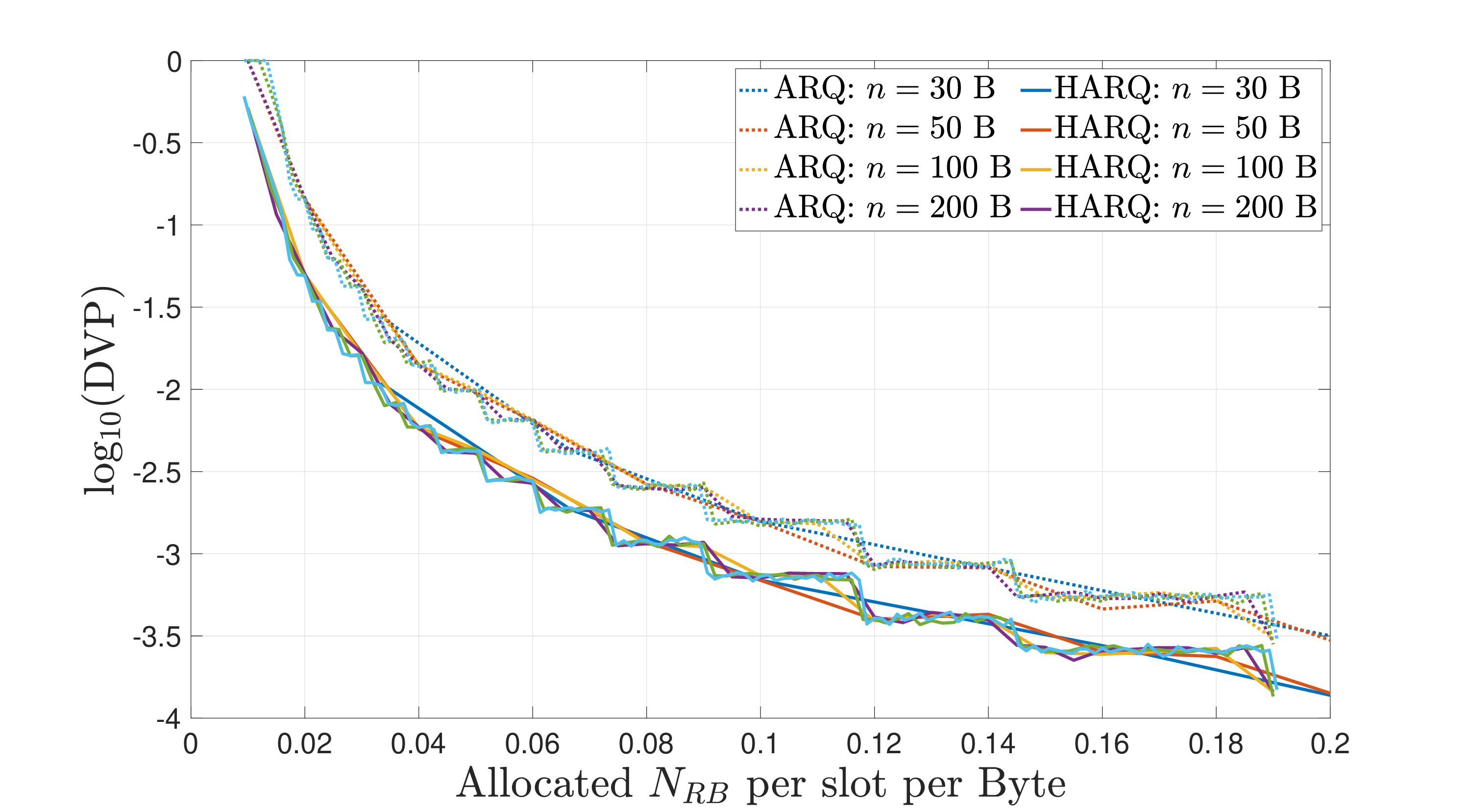}
\caption{DVP vs. allocated resources per slot per Byte of packet length, $\nicefrac{8\nrb}{n}$ for different $n$.}
\label{fig:Sim_dvp-nrb-nrbByn}
\end{figure}

\subsubsection*{The Effect of Resource Allocation:}
Now, we study the effect of packet length and resource allocation in DVP. In Fig.~\ref{fig:Sim_dvp-nrb-n}, we show the DVP variation with allocated $\nrb$ per slot for different uncoded packet lengths. 
As seen already, increasing $\nrb$ generally improves DVP, and the improvement rate is significant, providing up to a four-order reduction in DVP across the $\nrb$ range, corresponding to an equivalent MCS range from 28 down to 0. This range of $\nrb$ depends on the packet length. Thus, as observed in the figure, this DVP reduction with respect to $\nrb$ is much steeper for smaller packet sizes.

Note that when the PER is high, resulting from a frugal resource allocation, the number of retransmissions required for success also gets higher.
This increases the average number of resources consumed per packet as illustrated in Fig.~\ref{fig:Sim_dvp-nrbUsed-n}, where DVP is plotted against the average resource consumption per packet for different packet lengths.
We used the same data from Fig.~\ref{fig:Sim_dvp-nrb-n} for comparison, and one can observe that the curve gets steeper for smaller $\nrb$. 
This effect becomes extreme for persistent ARQ with unlimited retransmission attempts, where the expected number of attempts is given by ${1}/{1-p}$.

In Fig.~\ref{fig:Sim_dvp-nrb-nrbByn}, we analyze the combined effect of $\nrb$ and $n$ by studying resource allocation normalized by packet length. The DVP is plotted against $\nrb$ per byte for different fixed packet lengths. 
Notably, the plots for different packet lengths align closely, indicating that the DVP depends primarily on the resources allocated per byte rather than on the individual values of $\nrb$ or $n$. This insight shows that with properly allocating resources, lower DVP can be achieved even for larger packets.
\begin{figure*}[t]
\centering
\begin{subfigure}[t]{0.49\textwidth}
\centering
\includegraphics[width=\textwidth]{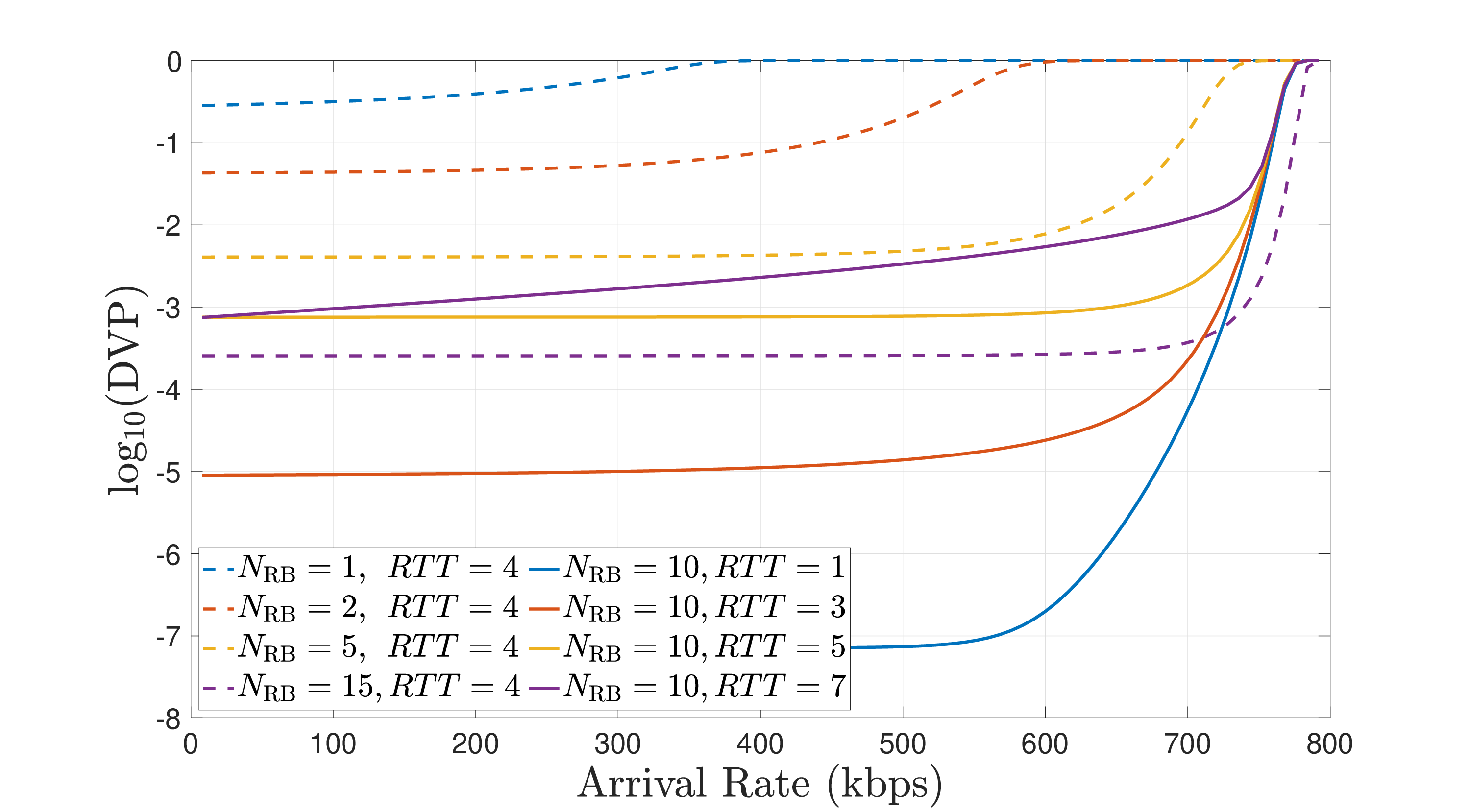}
\caption{DVP vs. of arrival rate.}
\label{fig:DvpvsArrivalrate}
\end{subfigure}
\begin{subfigure}[t]{0.49\textwidth}
\centering
\includegraphics[width=\textwidth]{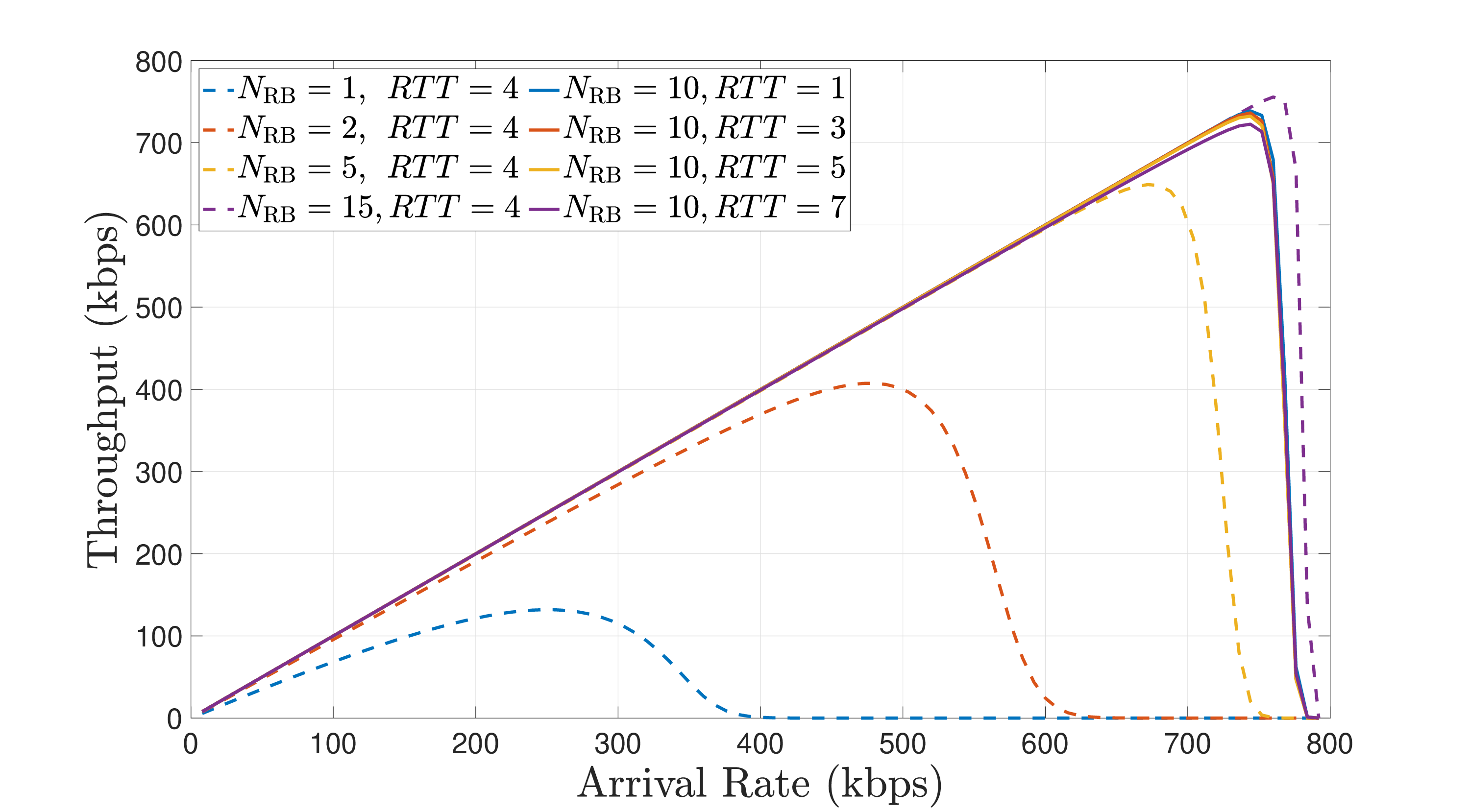}
\caption{Throughput vs. arrival rate showing the optimums.}
\label{fig:ThroughputVsArrivalrate}
\end{subfigure}
\caption{DVP and throughput vs. arrival rate for HARQ with different RTT and $\nrb$ with a fixed $n$ and varying $f$.}
\label{fig:DvpThroughputVsArrivalRate}
\end{figure*}

% \begin{figure}[t]
% \centering
% \includegraphics[width=0.98\linewidth]{Figures/DVP_vsArrRate_vsRTTAndNrb.eps}
% \caption{DVP vs. of arrival rates for different RTT and $\nrb$.}
% % \caption{DVP vs. delay target to RTT ratio.}
% \label{fig:DvpvsArrivalrate}
% \end{figure}
% \begin{figure}[t]
% \centering
% \includegraphics[width=0.98\linewidth]{Figures/throughput_fordifferentRTTAndNrb.eps}
% \caption{Throughput vs. arrival rate for different RTT and $\nrb$, showing throughput-maximising arrival rate.}
% % \caption{DVP vs. delay target to RTT ratio.}
% \label{fig:ThroughputVsArrivalrate}
% \end{figure}

\subsubsection*{Effect of Arrival Rate:}
A 5G system with strict latency requirements must discard all packets that violate the target delay, leaving only the remaining packets to contribute to the throughput. 
The throughput thus depends primarily on the arrival rate and the DVP. 
To study this relationship, we show the DVP and throughput of HARQ as a function of the arrival rate in Fig.~\ref{fig:DvpThroughputVsArrivalRate}.
Here, we vary the arrival rate $fn/T$ by adjusting $f$, while keeping the packet length $n$ fixed at its default value.
The dotted lines correspond to variations in $\nrb$ for a fixed RTT, while the solid lines represent variations in RTT for a fixed $\nrb$.

In Fig~\ref{fig:DvpvsArrivalrate}, observe the region of arrival rate at which the DVP rises sharply toward 1 from its asymptotic lower bound.
This rise in DVP lowers the throughput, leading to the emergence of an optimal arrival rate that maximizes the throughput, as illustrated in Fig.~\ref{fig:ThroughputVsArrivalrate}.
Notably, while RTT is one of the key parameters deciding the DVP, the arrival rate at which the DVP goes to a very high value (ca. $750$ kbps in this example), and thus the optimum arrival rate, appears largely independent of the RTT. This, however, is not the case for different $\nrb$, where the optimum arrival rate increases with more resource allocation. 
These observations are useful in the resource allocation tailored for RTT and packet length.

\section{Conclusion}\label{sec:conclusion}
In this work, we aimed to characterise the QoS in a 5G system focusing on ARQ and HARQ-IR retransmission schemes by accurately evaluating the delay violation probability (DVP) for a given target delay. 
Unlike existing methods, we proposed a novel delay model that incorporated decoding and feedback delay into it. This also demanded the inclusion of a multi-server queueing model with multiple parallel ARQ/HARQ processes where the packets do not wait for feedback from previous transmissions, thereby saving valuable transmission opportunities. 
Using this delay model and a novel packet error rate (PER) model based on finite blocklength packet transmission theory, we computed closed-form expressions and algorithms to compute DVP for ARQ and HARQ schemes. 
Our assumptions closely followed 3GPP standards and can be adapted to various scenarios, thus enhancing the usability of this work.

Our numerical evaluations demonstrated that the proposed evaluation schemes significantly outperform state-of-the-art immediate feedback (IF) models in terms of accuracy, with the performance gap widening as decoding and feedback delays increase. 
We observed that While HARQ achieves better DVP outcomes than persistent ARQ under normal circumstances, persistent ARQ is better in some specific cases due to the practical constraints of allowing an arbitrary number of retransmission attempts for HARQ. 
We illustrated how parameter tuning affects DVP and emphasised the importance of balancing MCS and resource allocation to regulate QoS in 5G networks. 
We observed that sufficient resource allocation per byte of packet size can help achieve low DVP levels, even for larger packet sizes. 
Additionally, we saw that the throughput of the system initially increases with the arrival rate but eventually decreases due to the increase in delay violation. This revealed the existence of an optimum arrival rate that maximises the throughput.

Beyond DVP analysis, our numerical results can inform resource allocation algorithms, enabling them to guarantee QoS under specific system configurations. 
These findings underscore the importance of optimizing resource allocation and MCS selection to meet the stringent delay and reliability requirements of latency and reliability sensitive 5G applications, marking a step toward real-world implementation of 5G networks. 

\bibliographystyle{IEEEtran}
\bibliography{refs.bib}

% Generated by IEEEtran.bst, version: 1.14 (2015/08/26)
\begin{thebibliography}{10}
\providecommand{\url}[1]{#1}
\csname url@samestyle\endcsname
\providecommand{\newblock}{\relax}
\providecommand{\bibinfo}[2]{#2}
\providecommand{\BIBentrySTDinterwordspacing}{\spaceskip=0pt\relax}
\providecommand{\BIBentryALTinterwordstretchfactor}{4}
\providecommand{\BIBentryALTinterwordspacing}{\spaceskip=\fontdimen2\font plus
\BIBentryALTinterwordstretchfactor\fontdimen3\font minus \fontdimen4\font\relax}
\providecommand{\BIBforeignlanguage}[2]{{%
\expandafter\ifx\csname l@#1\endcsname\relax
\typeout{** WARNING: IEEEtran.bst: No hyphenation pattern has been}%
\typeout{** loaded for the language `#1'. Using the pattern for}%
\typeout{** the default language instead.}%
\else
\language=\csname l@#1\endcsname
\fi
#2}}
\providecommand{\BIBdecl}{\relax}
\BIBdecl

\bibitem{3gpp2017study}
T.~3GPP, ``Study on new radio (nr) access technology-physical layer aspects-release 14,'' \emph{TR 38.802}, 2017.

\bibitem{popovski20185G}
P.~Popovski, K.~F. Trillingsgaard, O.~Simeone, and G.~Durisi, ``5g wireless network slicing for embb, urllc, and mmtc: A communication-theoretic view,'' \emph{IEEE Access}, vol.~6, pp. 55\,765--55\,779, 2018.

\bibitem{3gpp.22.104}
3GPP, ``{5G; Service requirements for cyber-physical control applications in vertical domains },'' {3rd Generation Partnership Project (3GPP)}, Technical Specification (TS) 22.104, 6 2024, version 18.4.0.

\bibitem{sisinni2018industrial}
E.~Sisinni, A.~Saifullah, S.~Han, U.~Jennehag, and M.~Gidlund, ``Industrial internet of things: Challenges, opportunities, and directions,'' \emph{IEEE transactions on industrial informatics}, vol.~14, no.~11, pp. 4724--4734, 2018.

\bibitem{durisi2016URLLC}
G.~Durisi, T.~Koch, and P.~Popovski, ``Toward massive, ultrareliable, and low-latency wireless communication with short packets,'' \emph{Proceedings of the IEEE}, vol. 104, no.~9, pp. 1711--1726, 2016.

\bibitem{popovski20226g}
P.~Popovski, F.~Chiariotti, K.~Huang, A.~E. Kalør, M.~Kountouris, N.~Pappas, and B.~Soret, ``A perspective on time toward wireless 6g,'' \emph{Proceedings of the IEEE}, vol. 110, no.~8, pp. 1116--1146, 2022.

\bibitem{3gpp.38.913}
3GPP, ``{5G; Study on Scenarios and Requirements for Next Generation Access Technologies},'' {3rd Generation Partnership Project (3GPP)}, Technical Specification (TS) 38.913-7.9, 5 2024, version 18.0.0.

\bibitem{celebi2021latency}
H.~B. Celebi, A.~Pitarokoilis, and M.~Skoglund, ``Latency and reliability trade-off with computational complexity constraints: Os decoders and generalizations,'' \emph{IEEE Transactions on Communications}, vol.~69, no.~4, pp. 2080--2092, 2021.

\bibitem{Peng2011ReliabilityPHY}
P.~Wu and N.~Jindal, ``Coding versus arq in fading channels: How reliable should the phy be?'' \emph{IEEE Transactions on Communications}, vol.~59, no.~12, pp. 3363--3374, 2011.

\bibitem{ArqCommEngDeskRef}
R.~Hamzaoui, V.~Stanković, Z.~Xiong, K.~Ramchandran, R.~Puri, A.~Majumdar, and J.~Chou, ``Chapter 3.4 - channel protection fundamentals,'' in \emph{Communications Engineering Desk Reference}, 1st~ed.\hskip 1em plus 0.5em minus 0.4em\relax Academic Press, 2009.

\bibitem{Frenger2001HarqHSDPA}
P.~Frenger, S.~Parkvall, and E.~Dahlman, ``Performance comparison of harq with chase combining and incremental redundancy for hsdpa,'' in \emph{IEEE 54th Vehicular Technology Conference. VTC Fall 2001. Proceedings (Cat. No.01CH37211)}, vol.~3, 2001, pp. 1829--1833 vol.3.

\bibitem{DAHLMAN2014299}
E.~Dahlman, S.~Parkvall, and J.~Sköld, ``Chapter 12 - retransmission protocols,'' in \emph{4G: LTE/LTE-Advanced for Mobile Broadband}, 2nd~ed.\hskip 1em plus 0.5em minus 0.4em\relax Oxford: Academic Press, 2014, pp. 299--319.

\bibitem{Ostman2018Harq}
J.~Östman, R.~Devassy, G.~C. Ferrante, and G.~Durisi, ``Low-latency short-packet transmissions: Fixed length or harq?'' in \emph{2018 IEEE Globecom Workshops (GC Wkshps)}, 2018, pp. 1--6.

\bibitem{Telatar1995}
I.~Telatar and R.~Gallager, ``Combining queueing theory with information theory for multiaccess,'' \emph{IEEE Journal on Selected Areas in Communications}, vol.~13, no.~6, pp. 963--969, 1995.

\bibitem{bisnik2006queuing}
N.~Bisnik and A.~Abouzeid, ``Queuing network models for delay analysis of multihop wireless ad hoc networks,'' in \emph{Proceedings of the 2006 international conference on Wireless communications and mobile computing}, 2006, pp. 773--778.

\bibitem{devassy2019reliable}
R.~Devassy, G.~Durisi, G.~C. Ferrante, O.~Simeone, and E.~Uysal, ``Reliable transmission of short packets through queues and noisy channels under latency and peak-age violation guarantees,'' \emph{IEEE Journal on Selected Areas in Communications}, vol.~37, no.~4, pp. 721--734, 2019.

\bibitem{Zubaidi2016}
H.~Al-Zubaidy, J.~Liebeherr, and A.~Burchard, ``Network-layer performance analysis of multihop fading channels,'' \emph{IEEE/ACM Transactions on Networking}, vol.~24, no.~1, pp. 204--217, 2016.

\bibitem{yeh2012fundamental}
E.~M. Yeh \emph{et~al.}, ``Fundamental performance limits in cross-layer wireless optimization: throughput, delay, and energy,'' \emph{Foundations and Trends{\textregistered} in Communications and Information Theory}, vol.~9, no.~1, pp. 1--112, 2012.

\bibitem{petreska2019bound}
N.~Petreska, H.~Al-Zubaidy, R.~Knorr, and J.~Gross, ``Bound-based power optimization for multi-hop heterogeneous wireless industrial networks under statistical delay constraints,'' \emph{Computer networks}, vol. 148, pp. 262--279, 2019.

\bibitem{chang1995EB}
C.-S. Chang and J.~A. Thomas, ``Effective bandwidth in high-speed digital networks,'' \emph{IEEE Journal on Selected areas in Communications}, vol.~13, no.~6, pp. 1091--1100, 1995.

\bibitem{hassan2004markov}
M.~Hassan, M.~M. Krunz, and I.~Matta, ``Markov-based channel characterization for tractable performance analysis in wireless packet networks,'' \emph{IEEE Transactions on Wireless Communications}, vol.~3, no.~3, pp. 821--831, 2004.

\bibitem{wu2003effectiveCap}
D.~Wu and R.~Negi, ``Effective capacity: a wireless link model for support of quality of service,'' \emph{IEEE Transactions on wireless communications}, vol.~2, no.~4, pp. 630--643, 2003.

\bibitem{fidler2006wlc15}
M.~Fidler, ``Wlc15-2: A network calculus approach to probabilistic quality of service analysis of fading channels,'' in \emph{IEEE Globecom 2006}.\hskip 1em plus 0.5em minus 0.4em\relax IEEE, 2006, pp. 1--6.

\bibitem{jiang2008SNC}
Y.~Jiang and Y.~Liu, \emph{Stochastic network calculus}.\hskip 1em plus 0.5em minus 0.4em\relax Springer, 2008.

\bibitem{Ciucu2010}
F.~Ciucu, O.~Hohlfeld, and P.~Hui, ``Non-asymptotic throughput and delay distributions in multi-hop wireless networks,'' in \emph{2010 48th Annual Allerton Conference on Communication, Control, and Computing (Allerton)}, 2010, pp. 662--669.

\bibitem{larsson2016HARQ}
P.~Larsson, J.~Gross, H.~Al-Zubaidy, L.~K. Rasmussen, and M.~Skoglund, ``Effective capacity of retransmission schemes: A recurrence relation approach,'' \emph{IEEE Transactions on Communications}, vol.~64, no.~11, pp. 4817--4835, 2016.

\bibitem{akin2015backlog}
S.~Akin and M.~Fidler, ``Backlog and delay reasoning in harq system,'' in \emph{2015 27th International Teletraffic Congress}.\hskip 1em plus 0.5em minus 0.4em\relax IEEE, 2015, pp. 185--193.

\bibitem{Shannon}
C.~E. Shannon, ``A mathematical theory of communication,'' \emph{The Bell system technical journal}, vol.~27, no.~3, pp. 379--423, 1948.

\bibitem{schiessl2015delay}
S.~Schiessl, J.~Gross, and H.~Al-Zubaidy, ``Delay analysis for wireless fading channels with finite blocklength channel coding,'' in \emph{Proceedings of the 18th ACM International Conference on Modeling, Analysis and Simulation of Wireless and Mobile Systems}, 2015, pp. 13--22.

\bibitem{schiessl2018delay}
S.~Schiessl, H.~Al-Zubaidy, M.~Skoglund, and J.~Gross, ``Delay performance of wireless communications with imperfect csi and finite-length coding,'' \emph{IEEE Transactions on Communications}, vol.~66, no.~12, pp. 6527--6541, 2018.

\bibitem{devassy2014finite}
R.~Devassy, G.~Durisi, P.~Popovski, and E.~G. Str{\"o}m, ``Finite-blocklength analysis of the arq-protocol throughput over the gaussian collision channel,'' in \emph{2014 6th International Symposium on Communications, Control and Signal Processing (ISCCSP)}.\hskip 1em plus 0.5em minus 0.4em\relax IEEE, 2014, pp. 173--177.

\bibitem{devassy2018delay}
R.~Devassy, G.~Durisi, G.~C. Ferrante, O.~Simeone, and E.~Uysal-Biyikoglu, ``Delay and peak-age violation probability in short-packet transmissions,'' in \emph{2018 IEEE International Symposium on Information Theory (ISIT)}.\hskip 1em plus 0.5em minus 0.4em\relax IEEE, 2018, pp. 2471--2475.

\bibitem{polyanskiy2010FBL}
Y.~Polyanskiy, H.~V. Poor, and S.~Verd{\'u}, ``Channel coding rate in the finite blocklength regime,'' \emph{IEEE Transactions on Information Theory}, vol.~56, no.~5, pp. 2307--2359, 2010.

\bibitem{polyanskiy2011feedback}
------, ``Feedback in the non-asymptotic regime,'' \emph{IEEE Transactions on Information Theory}, vol.~57, no.~8, pp. 4903--4925, 2011.

\bibitem{Wei2013FBLblockfading}
W.~Yang, G.~Durisi, T.~Koch, and Y.~Polyanskiy, ``Block-fading channels at finite blocklength,'' in \emph{ISWCS 2013; The Tenth International Symposium on Wireless Communication Systems}.\hskip 1em plus 0.5em minus 0.4em\relax VDE, 2013, pp. 1--4.

\bibitem{caire2000modulation}
G.~Caire, E.~Leonardi, and E.~Viterbo, ``Modulation and coding for the gaussian collision channel,'' \emph{IEEE Transactions on Information Theory}, vol.~46, no.~6, pp. 2007--2026, 2000.

\bibitem{caire2001throughput}
G.~Caire and D.~Tuninetti, ``The throughput of hybrid-arq protocols for the gaussian collision channel,'' \emph{IEEE Transactions on Information Theory}, vol.~47, no.~5, pp. 1971--1988, 2001.

\bibitem{Sahin2014Harq}
C.~Sahin, L.~Liu, and E.~Perrins, ``On the finite blocklength performance of harq in modern wireless systems,'' in \emph{2014 IEEE Global Communications Conference}, 2014, pp. 3513--3519.

\bibitem{Sahin2015Harq}
------, ``On the queueing performance of harq systems with coding over finite transport blocks,'' in \emph{2015 IEEE Globecom Workshops (GC Wkshps)}, 2015, pp. 1--7.

\bibitem{Sahin2019Harq}
C.~Sahin, L.~Liu, E.~Perrins, and L.~Ma, ``Delay-sensitive communications over ir-harq: Modulation, coding latency, and reliability,'' \emph{IEEE Journal on Selected Areas in Communications}, vol.~37, no.~4, pp. 749--764, 2019.

\bibitem{3gpp.38.211}
3GPP, ``{5G; NR; Physical channels and modulation},'' {3rd Generation Partnership Project (3GPP)}, Technical Specification (TS) 38.211, 10 2024, version 18.4.0.

\bibitem{3gpp.38.212}
------, ``{5G; NR; Multiplexing and channel coding },'' {3rd Generation Partnership Project (3GPP)}, Technical Specification (TS) 38.212, 10 2024, version 18.4.0.

\bibitem{3gpp.38.214}
------, ``{5G; NR; Physical layer procedures for data },'' {3rd Generation Partnership Project (3GPP)}, Technical Specification (TS) 38.214, 10 2024, version 18.4.0.

\bibitem{HarqIRCommEngDeskRef}
E.~Dahlman, S.~Parkvall, J.~Sköld, and P.~Beming, ``Chapter 4.5 - scheduling, link adaptation and hybrid arq,'' in \emph{Communications Engineering Desk Reference}, 1st~ed.\hskip 1em plus 0.5em minus 0.4em\relax Academic Press, 2009.

\bibitem{szczecinski2013rate}
L.~Szczecinski, S.~R. Khosravirad, P.~Duhamel, and M.~Rahman, ``Rate allocation and adaptation for incremental redundancy truncated harq,'' \emph{IEEE Transactions on Communications}, vol.~61, no.~6, pp. 2580--2590, 2013.

\bibitem{heidarzadeh2018systematic}
A.~Heidarzadeh, J.-F. Chamberland, R.~D. Wesel, and P.~Parag, ``A systematic approach to incremental redundancy with application to erasure channels,'' \emph{IEEE Transactions on Communications}, vol.~67, no.~4, pp. 2620--2631, 2018.

\bibitem{whang2019econometric}
Y.~Whang, \emph{Econometric Analysis of Stochastic Dominance: Concepts, Methods, Tools, and Applications}, ser. Themes in Modern Econometrics.\hskip 1em plus 0.5em minus 0.4em\relax Cambridge University Press, 2019.

\bibitem{johnson2005univariate}
N.~L. Johnson, A.~W. Kemp, and S.~Kotz, \emph{Univariate discrete distributions}.\hskip 1em plus 0.5em minus 0.4em\relax John Wiley \& Sons, 2005, vol. 444.

\bibitem{3gpp.22.261}
3GPP, ``{Technical Specification Group Services and System Aspects; Stage 1 (Release 17)},'' {3rd Generation Partnership Project (3GPP)}, Technical Specification (TS) 22.261-7.6.1-1, 9 2022, version 17.11.0.

\end{thebibliography}
% \input{10_bio.tex}

% https://www.comsoc.org/publications/journals/ieee-tcom/ieee-transactions-communications-editorial-board

\end{document}